\newcommand{\declarecolor}[2]{\definecolor{#1}{RGB}{#2}\expandafter\newcommand\csname #1\endcsname[1]{\textcolor{#1}{##1}}}
\def\prob#1#2{\mbox{Pr}_{#1}\left[ #2 \right]}
\newtheorem{theorem}{Theorem}
\newtheorem{lemma}{Lemma}[section]
\newtheorem{corollary}{Corollary}[section]
\newtheorem{proposition}{Proposition}
\newtheorem{definition}{Definition}[section]
\def\calY{\mathcal{Y}}
\def\calS{\mathcal{S}}
\newcommand\R{\boldsymbol{\mathbb{R}}}
\newcommand{\expct}[2]{\ensuremath{\mathop{\text{\normalfont \textbf{E}}}_{#1}}\left[#2\right]}
\newcommand{\diffp}{\varepsilon}
\newcommand{\cX}{\mathcal{X}}
\newcommand{\cY}{\mathcal{Y}}
\newcommand{\lap}{\texttt{Lap}}
\newcommand{\gum}{\texttt{Gumbel}}
\newcommand{\expo}{\texttt{Expo}}
\newcommand{\noise}{\texttt{Noise}}
\newcommand{\AT}{\texttt{AboveThreshold}}
\begin{document}

\title{Unbounded Differentially Private Quantile and Maximum Estimation}

\author{  David Durfee\\
  Anonym Inc.\\
  \texttt{david@anonymco.com}}

\maketitle
\thispagestyle{empty}

\begin{abstract}

In this work we consider the problem of differentially private computation of quantiles for the data, especially the highest quantiles such as maximum, but with an unbounded range for the dataset. 
We show that this can be done efficiently through a simple invocation of $\AT$, a subroutine that is iteratively called in the fundamental Sparse Vector Technique, even when there is no upper bound on the data.
In particular, we show that this procedure can give more accurate and robust estimates on the highest quantiles with applications towards clipping that is essential for differentially private sum and mean estimation.
In addition, we show how two invocations can handle the fully unbounded data setting.
Within our study, we show that an improved analysis of $\AT$ can improve the privacy guarantees for the widely used Sparse Vector Technique that is of independent interest. 
We give a more general characterization of privacy loss for $\AT$ which we immediately apply to our method for improved privacy guarantees.
Our algorithm only requires one $O(n)$ pass through the data, which can be unsorted, and each subsequent query takes $O(1)$ time.
We empirically compare our unbounded algorithm with the state-of-the-art algorithms in the bounded setting.
For inner quantiles, we find that our method often performs better on non-synthetic datasets.
For the maximal quantiles, which we apply to differentially private sum computation, we find that our method performs significantly better.

\end{abstract}

\section{Introduction}

In statistics, quantiles are values that divide the data into specific proportions, such as median that divides the data in half.
Quantiles are a central statistical method for better understanding a dataset.
However, releasing quantile values could leak information about specific individuals within a sensitive dataset.
As a result, it becomes necessary to ensure that individual privacy is ensured within this computation.
Differential privacy offers a rigorous method for measuring the amount that one individual can change the output of a computation.
Due it's rigorous guarantees, differential privacy has become the gold standard for measuring privacy.
This measurement method then offers an inherent tradeoff between accuracy and privacy with outputs of pure noise achieving perfect privacy.
Thus, the goal of designing algorithms for differentially private quantile computation is to maximize accuracy for a given level of privacy.

There are a variety of previous methods for computing a given quantile of the dataset that we will cover in Section~\ref{subsec:related}.
The most effective and practical method invokes the exponential mechanism~\cite{smith2011privacy} over known bounds upon the dataset.
For computing multiple quantiles this method can be called iteratively.
Follow-up work showed that it could be called recursively by splitting the dataset at each call to reduce the privacy cost of composition~\cite{kaplan2022differentially}. Further, a generalization can be called efficiently in one shot ~\cite{gillenwater2021differentially}.

\subsection{Our contributions}

In this work, we offer an alternative approach, Unbounded Quantile Estimation (UQE), that also invokes a well-known technique and can additionally be applied to the unbounded setting.
While previous techniques design distributions to draw from that are specific to the dataset, our method will simply perform a noisy guess-and-check.
Initially we assume there is only a lower bound on the data, as non-negative data is incredibly common in real world datasets with sensitive individual information.
Our method will simply iteratively increase the candidate value by a small percentage
and halt when the number of data points below the value exceeds the desired amount dictated by the given quantile.
While the relative increase will be small each iteration, the exponential nature still implies that the candidate value will become massive within a reasonable number of iterations.
As a consequence, our algorithm can handle the unbounded setting where we also show that two calls to this procedure can handle fully unbounded data.
Computing multiple quantiles can be achieved by applying the recursive splitting framework from \cite{kaplan2022differentially}.

Performing our guess-and-check procedure with differential privacy exactly fits $\AT$, a method that is iteratively called in the Sparse Vector Technique \cite{dwork2009complexity}.
We also take a deeper look at $\AT$ and unsurprisingly show that similar to report noisy max algorithms, the noise addition can come from the Laplace, Gumbel or Exponential distributions.
We further push this analysis to show that for monotonic queries, a common class of queries which we will also utilize in our methods, the privacy bounds for composition within the Sparse Vector Technique can be further improved.
Given the widespread usage of this technique,\footnote{For example, see \cite{roth2010interactive,hardt2010multiplicative, dwork2015preserving, nissim2016locating, nissim2018clustering, kaplan2020privately, kaplan2020find, hasidim2020adversarially, bun2017make, bassily2018model,cummings2020privately, ligett2017accuracy, barthe2016advanced,barthe2016proving,steinke2015between,cummings2015privacy,ullman2015private,nandi2020privately,shokri2015privacy,hsu2013differential,sajed2019optimal,feldman2017generalization,blum2015privacy}} we believe this result is of independent interest.
Furthermore, we give a more general characterization of query properties that can improve the privacy bounds of $\AT$.
We immediately utilize this characterization in our unbounded quantile estimation algorithm to improve privacy guarantees.

While the previous algorithms for quantile estimation can still apply incredibly loose bounds to ensure the data is contained within, this can have a substantial impact upon the accuracy for estimating the highest quantiles such as maximum.
This leads to an especially important application for our algorithm, differentially private sum computation, which can thereby be used to compute mean as well.
Performing this computation practically without assumptions upon the distribution often requires clipping the data and adding noise proportionally.
Clipping too high adds too much noise, and clipping too low changes the sum of the data too much.
The highest quantiles of the data are used for clipping to optimize this tradeoff.
The unbounded nature of our approach fundamentally allows us to estimate the highest quantiles more robustly and improve the accuracy of differentially private sum computation.

This improvement in differentially private sum computation is further evidenced by our empirical evaluation, with significant improvements in accuracy.
Our empericial comparison will be upon the same datasets from previous work in the bounded setting.
We also compare private computation of the inner quantiles on these datasets.
For synthetic datasets generated from uniform or guassian distributions, we see that the more structured approach of designing a distribution for the data from the exponential mechanism consistently performs better.
However, for the real-world datasets, we see that our unstructured approach tends to perform better even within this bounded setting.
By design our algorithm is less specific to the data, so our alternative approach becomes advantageous when less is known about the structure and bounds of the data \textit{a priori}.
As such, for large-scale privacy systems that provide statistical analysis for a wide variety of datasets, our methods will be more flexible to handle greater generality accurately.



\subsection{Background literature}\label{subsec:related}

The primary algorithm for privately computing a given quantile, by which we compare our technique, applies the exponential mechanism with a utility function based upon closeness to the true quantile \cite{smith2011privacy}.
We will discuss this algorithm, which we denote as Exponential Mechanism Quantile (EMQ), in greater detail in Section~\ref{sec:comparison}.
This approach was then extended to computing multiple quantiles more cleverly by recursively splitting the data and establishing that only one partition of the dataset can change between neighbors thereby reducing the composition costs \cite{kaplan2022differentially}.
Additional follow-up work showed that a generalization of the utility function to multiple quantiles could be efficiently drawn upon in one shot \cite{gillenwater2021differentially}.
Another recent result examined this problem in the streaming data setting and gave a method that only uses strongly sub-linear space complexity \cite{alabi2022bounded}.

Quantile computation can also be achieved through CDF estimation
\cite{bun2015differentially, kaplan2020privately}.
However these techniques offer limited practicality as they rely upon several reductions and parameter tuning.
Recursively splitting the data is also done for CDF estimation algorithms where the statistics from each split can be aggregated for quantile computation \cite{dwork2010differential, chan2011private}.
These techniques tend to be overkill for quantile estimation and thus suffer in accuracy comparatively.

We will also give improved privacy analysis of the Sparse Vector Technique which was originally introduced in \cite{dwork2009complexity}. A more detailed analysis of the method can be found in \cite{lyu2017understanding}.
Additional recent work has shown that more information can be output from the method at no additional privacy cost \cite{kaplan2021sparse,ding2023free}.

\subsection{Organization}

We provide the requisite notation and definitions in Section~\ref{sec:prelim}. In Section~\ref{sec:at}, we review the $\AT$ algorithm from the literature and show that privacy analysis can be further improved. In Section~\ref{sec:algo}, we provide our unbounded quantile estimation method. In Section~\ref{sec:comparison}, we consider the estimation of the highest quantiles which has immediate application to differentially private sum and mean estimation.
In Section~\ref{sec:experiments}, we test our method compared to the previous techniques on synthetic and real world datasets.
In Section~\ref{sec:generalized}, we give further results on the $\AT$ algorithm and provide the missing proofs from Section~\ref{sec:at}.
In Section~\ref{sec:fully_unbounded}, we provide further variants and extensions of our unbounded quantile estimation technique.

\section{Preliminaries}\label{sec:prelim}

We will let $x,x'$ denote datasets in our data universe $\cX$.

\begin{definition}\label{def:neighbor}
    Datasets $x,x' \in \cX$ are neighboring if at most one individual's data has been changed.
\end{definition}

Note that we use the \textit{swap} definition, but our analysis of the \AT algorithm will be agnostic to the definition of neighboring.
Using this definition as opposed to the \textit{add-subtract} definition is necessary to apply the same experimental setup as in \cite{gillenwater2021differentially}. 
Our differentially private quantile estimation will apply to either and we will give the privacy guarantees if we instead use the \textit{add-subtract} definition in Section~\ref{sec:add_subtract}.

\begin{definition}\label{def:sensitivity}
    A function $f: \cX \to \R$ has sensitivity $\Delta$ if for any neighboring datasets $|f(x) - f(x')| \leq \Delta$
\end{definition}

\begin{definition}\label{def:dp}\cite{dwork2006calibrating, dwork2006our}
	A mechanism $M:\cX\to \cY$ is $(\diffp, \delta)$-differentially-private (DP) if for any neighboring datasets $x,x' \in \cX$ and $S \subseteq \cY$:
	\[
		{\Pr[M(x) \in S]}\leqslant
		e^{\diffp}{\Pr[M(x') \in S]} + \delta.
	\]
\end{definition}

We will primarily work with pure differential privacy in this work where $\delta = 0$. 
We will also be considering the composition properties of the Sparse Vector Technique, and the primary method for comparison will be Concentrated Differential Privacy that has become widely used in practice due to it's tighter and simpler advanced composition properties \cite{bun2016concentrated}.
This definition is instead based upon Reny divergence where for probability distributions $P,Q$ over the same domain and $\alpha > 1$

\[
D_\alpha(P||Q) = \frac{1}{\alpha - 1} \ln \expct{z \sim P}{\left(\frac{P(z)}{Q(z)}\right)^{\alpha-1}}
\]

\begin{definition}\label{def:zcdp}\cite{bun2016concentrated}
	A mechanism $M:\cX\to \cY$ is $\rho$-zero-concentrated-differentially-private (zCDP) if for any neighboring datasets $x,x' \in \cX$ and all $\alpha \in (1,\infty)$:
	\[
		D_\alpha(M(x) || M(x')) \leq \rho \alpha.
	\]
\end{definition}

We can translate DP into zCDP in the following way.

\begin{proposition}\label{lem:dpTozCDP}\cite{bun2016concentrated}
    If $M$ satisfies $\diffp$-DP then $M$ satisfies $\frac{1}{2}\diffp^2$-zCDP 
\end{proposition}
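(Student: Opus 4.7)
The plan is to prove the proposition by analyzing the privacy loss random variable $L(z) = \ln(P(z)/Q(z))$, where $P = M(x)$ and $Q = M(x')$ for arbitrary neighbors $x, x'$. First I would invoke the standard equivalence that $\diffp$-DP is the same as the pointwise bound $|L(z)| \leq \diffp$ for all $z$ in the support; this follows from Definition~\ref{def:dp} applied to singleton events (and limits thereof in the continuous case). This turns a probabilistic guarantee into a deterministic bound on $L$.

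Next I would rewrite the Rényi divergence in the statement of zCDP (Definition~\ref{def:zcdp}) using the identity
\[
D_\alpha(P\|Q) \;=\; \frac{1}{\alpha-1}\ln \E_{z\sim P}\!\left[e^{(\alpha-1)L(z)}\right],
\]
so the task reduces to controlling the moment generating function of $L$ under $P$. Let $\mu \defeq \E_{z\sim P}[L(z)]$, which is exactly the KL divergence from $P$ to $Q$ and is therefore non-negative. Because $L - \mu$ is mean-zero and bounded in an interval of length $2\diffp$, Hoeffding's lemma gives
\[
\E_{z\sim P}\!\left[e^{(\alpha-1)(L(z)-\mu)}\right] \leq \exp\!\left(\tfrac{(\alpha-1)^2 \diffp^2}{2}\right),
\]
and hence $D_\alpha(P\|Q) \leq \mu + \tfrac{(\alpha-1)\diffp^2}{2}$.

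Finally, I would upgrade this to the clean $\tfrac{1}{2}\diffp^2 \alpha$ bound by showing $\mu = \mathrm{KL}(P\|Q) \leq \tfrac{1}{2}\diffp^2$. Plugging this in yields $D_\alpha(P\|Q) \leq \tfrac{1}{2}\diffp^2 + \tfrac{(\alpha-1)\diffp^2}{2} = \tfrac{1}{2}\diffp^2 \alpha$, which is exactly the zCDP guarantee required.

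The main obstacle is the final KL bound, since $|L| \leq \diffp$ alone only gives $\mu \leq \diffp$, which is too weak. The fix is to exploit the additional constraint $\E_{z\sim Q}[e^{L(z)}] = 1$ (i.e., that $P$ is a probability distribution after the change of measure from $Q$), which together with Jensen's inequality forces $\E_{z\sim Q}[L(z)] \leq 0$. Combining $\E_P[L] - \E_Q[L] \leq \mathrm{Var}$-type control obtained from $|L|\le \diffp$, or equivalently reducing to the worst-case two-point distribution supported on $\{-\diffp, +\diffp\}$ and optimizing, gives the clean bound $\mu \leq \tfrac{1}{2}\diffp^2$. This worst-case two-point reduction is the only delicate computation; the rest of the argument is a direct Hoeffding-style bound on the MGF of a bounded random variable.
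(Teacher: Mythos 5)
The paper does not prove this proposition---it is quoted directly from \cite{bun2016concentrated}---so the comparison is with the standard proof there. Your argument is correct but takes a genuinely different route. Bun and Steinke prove the underlying lemma by applying convexity of $t \mapsto t^{\alpha}$ to the likelihood ratio $r = P/Q \in [e^{-\diffp}, e^{\diffp}]$ to reduce $\E_{Q}[r^{\alpha}]$ to the extremal two-point distribution in one shot, and then verify the resulting hyperbolic-function inequality $\bigl(\sinh(\alpha\diffp)-\sinh((\alpha-1)\diffp)\bigr)/\sinh(\diffp) \leq e^{\alpha(\alpha-1)\diffp^{2}/2}$ by calculus. You instead split the work: a Hoeffding/sub-Gaussian MGF bound on the centered privacy loss (valid, since $L-\mu$ is mean-zero under $P$ and supported in an interval of length $2\diffp$, giving the factor $(2\diffp)^{2}/8 = \diffp^{2}/2$), plus the sharp mean bound $\mathrm{KL}(P\|Q) \leq \tfrac{1}{2}\diffp^{2}$. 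This is more modular and avoids the $\sinh$ computation, but it shifts all the delicacy into the KL step, exactly as you note: the easy bounds ($\mu \leq \diffp$, or the total-variation route giving $\mu \leq \diffp(e^{\diffp}-1)$ or $\mu \leq \diffp^{2}$) are off by a factor of $2$ and would not close the argument. Your two-point reduction does work---$\mathrm{KL}(P\|Q) = \E_{Q}[r\ln r]$ with $r\ln r$ convex and $\E_{Q}[r]=1$ is maximized when $r$ sits on the endpoints $e^{\pm\diffp}$, yielding $\diffp\tanh(\diffp/2) \leq \tfrac{1}{2}\diffp^{2}$---but you should drop the claim that this is ``equivalent'' to a variance/TV-type control; the naive version of that alternative only gives $2\diffp\tanh(\diffp/2) \leq \diffp^{2}$, so the convexity-at-the-endpoints argument is the one you actually need to write out.
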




In our examination of $\AT$ we will add different types of noise, similar to the report noisy max algorithms~\cite{ding2021permute}. Accordingly, we will consider noise from the Laplace, Gumbel and Exponential distributions where $\lap(b)$ has PDF $p_\lap(z;b)$, $\gum(b)$ has PDF $p_\gum(z;b)$, and $\expo(b)$ has PDF $p_\expo(z;b)$ where

\[
p_\lap(z;b) = \frac{1}{2b} \exp\left(-|z|/b \right) \text{ ~ ~ ~ ~ ~  ~ ~ ~ } p_\gum(z;b) = \frac{1}{b} \exp\left(-\left(z/b + e^{-z/b}\right)\right)
\]

 \[
p_\expo(z;b) = \begin{cases}
        \frac{1}{b} \exp\left(-z/b \right) \text{~ ~$z \geq 0$}
        \\
        0 \text{~ ~ ~ ~ ~ ~ ~ ~ ~ ~ ~ $z < 0$}
        \end{cases}
 \]

We let $\noise(b)$ denote noise addition from any of $\lap(b), \gum(b)$, or $\expo(b)$.
We will also utilize the definition of the exponential mechanism to analyze the addition of $\gum$ noise.

\begin{definition}\cite{mcsherry2007mechanism}
The Exponential Mechanism is a randomized mapping $M: \cX \to \cY$ such that 

\[
\prob{}{M(x) = y} \propto \exp\left(\frac{\diffp \cdot q(x,y)}{2\Delta}\right)
\]

where $q: \cX \times \cY \to \R$ has sensitivity $\Delta$.
    
\end{definition}

\section{Improved Analysis for Sparse Vector Technique}\label{sec:at}

In this section, we review the $\AT$ algorithm from the literature. To our knowledge, this technique has only been used with Laplace noise in the literature. Unsurprisingly, we show that Gumbel and Exponential noise can also be applied, with the former allowing for a closed form expression of each output probability.
We further show that for monotonic queries the privacy analysis of the Sparse Vector Technique, which iteratively applies $\AT$, can be improved.
Most proofs will be pushed to the appendix where we also give a more general characterization of query properties that can improve the privacy bounds of $\AT$.

\subsection{Above Threshold Algorithm}\label{subsec:at_algo}

We first provide the algorithm for $\AT$ where noise can be applied from any of the Laplace, Gumbel or Exponential distributions.

\begin{algorithm}
\caption{\AT \label{algo:lapAT}}
\begin{algorithmic}[1]
\REQUIRE Input dataset $x$, a stream of queries $\{f_i: \cX \to \R\}$ with sensitivity $\Delta$, and a threshold $T$
\STATE Set $\hat{T} = T + \noise(\Delta/\diffp_1)$
\FOR{each query $i$}
\STATE Set $\nu_i = \noise(\Delta/\diffp_2)$
\IF{$f_i(x) + \nu_i \geq \hat{T}$} 
\STATE Output $\top$ and \texttt{halt}
\ELSE{} 
\STATE Output $\bot$
\ENDIF
\ENDFOR
\end{algorithmic}
\end{algorithm}

We will also define a common class of queries within the literature that is often seen to achieve a factor of 2 improvement in privacy bounds.

\begin{definition}\label{def:monotonic}
We say that stream of queries $\{f_i: \cX \to \R\}$ with sensitivity $\Delta$ is \textit{monotonic} if for any neighboring $x,x' \in \cX$ we have either $f_i(x) \leq f_i(x')$ for all $i$ or $f_i(x) \geq f_i(x')$ for all $i$.

\end{definition}

To our knowledge all previous derivations of $\AT$ in the literature apply $\lap$ noise which gives the following privacy guarantees.

\begin{lemma}[Theorem 2 and 3 of \cite{lyu2017understanding}]\label{lem:lapAT}

If the noise addition is $\lap$ then Algorithm~\ref{algo:lapAT} is $(\diffp_1 + 2 \diffp_2)$-DP for general queries and is $(\diffp_1 + \diffp_2)$-DP for monotonic queries.

\end{lemma}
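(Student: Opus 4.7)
The plan is to adapt the standard Sparse Vector Technique analysis. Fix any non-trivial output $o$ of the algorithm; since \AT halts at the first $\top$, such an output has the form $o = (\bot, \ldots, \bot, \top)$ where $\top$ appears at some position $k$ (the all-$\bot$ case is analogous and strictly easier, since it lacks the $\top$ factor). Writing $\hat{T} = T + \rho$ with $\rho \sim \lap(\Delta/\diffp_1)$, I would express the output probability as an integral over the realized noisy threshold:
\begin{equation*}
\Pr[\AT(x) = o] = \int p_\lap(t - T;\Delta/\diffp_1)\,\Bigl(\prod_{i < k} \Pr[\nu_i < t - f_i(x)]\Bigr)\,\Pr[\nu_k \geq t - f_k(x)]\, dt,
\end{equation*}
with the analogous expression for $x'$. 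The goal is to bound the ratio of these two integrals pointwise after a suitable change of variables in $t$.

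For general queries, I would substitute $t \mapsto t - \Delta$ in the integral for $x$. Because the Laplace density with scale $\Delta/\diffp_1$ satisfies $p_\lap(t - \Delta - T)/p_\lap(t - T) \leq e^{\diffp_1}$, this shift costs exactly one factor of $e^{\diffp_1}$ from the threshold noise. After the shift, each $\bot$ factor obeys $\Pr[\nu_i < t - \Delta - f_i(x)] \leq \Pr[\nu_i < t - f_i(x')]$ because $f_i(x') \leq f_i(x) + \Delta$; so the $\bot$ factors contribute no additional privacy loss. The remaining $\top$ factor requires comparing $\Pr[\nu_k \geq t - \Delta - f_k(x)]$ with $\Pr[\nu_k \geq t - f_k(x')]$; setting $A = t - \Delta - f_k(x)$ and $B = t - f_k(x')$, I would observe $B - A = \Delta + f_k(x) - f_k(x') \in [0, 2\Delta]$. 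Applying the Laplace survival-function ratio bound $\Pr[\nu_k \geq A] \leq e^{(B-A)\diffp_2/\Delta}\Pr[\nu_k \geq B]$ then yields a factor of at most $e^{2\diffp_2}$, for the total $e^{\diffp_1 + 2\diffp_2}$.

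For monotonic queries, I would split into two cases based on the direction of monotonicity; by the symmetric roles of $x$ and $x'$ it suffices to bound one direction of the ratio in each case. If $f_i(x) \geq f_i(x')$ for all $i$, no shift in $t$ is needed: the $\bot$ factors are already dominated since $t - f_i(x) \leq t - f_i(x')$, and the single $\top$ factor incurs only $e^{\diffp_2}$ because $(t - f_k(x')) - (t - f_k(x)) = f_k(x) - f_k(x') \in [0, \Delta]$. If instead $f_i(x) \leq f_i(x')$, the shift by $\Delta$ is required to control the $\bot$ factors and costs $e^{\diffp_1}$; but now $B - A = \Delta - (f_k(x') - f_k(x)) \in [0, \Delta]$ rather than $[0, 2\Delta]$, so the $\top$ factor contributes only $e^{\diffp_2}$. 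Both cases give $e^{\diffp_1 + \diffp_2}$, matching the claim.

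The main obstacle is the careful bookkeeping on the $\top$ query. One needs the Laplace survival-function ratio bound to hold for arbitrary real $A \leq B$ (including both negative or straddling zero), which I would verify from the pointwise inequality $|z + (B-A)| - |z| \leq B - A$ applied inside the survival integrals. The second delicate point is noticing that monotonicity restricts the sign of $f_k(x) - f_k(x')$, which is precisely what cuts the span $B - A$ from $2\Delta$ (general case) to $\Delta$ (monotonic case) and thereby halves the $\diffp_2$ contribution. Everything else is routine integration and substitution.
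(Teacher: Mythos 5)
Your proof is correct. Note that the paper does not prove this lemma inline---it cites Theorems 4 and 5 of Lyu et al.---but its appendix re-derives it (together with the $\expo$ and $\gum$ variants) as a corollary of a generalized one-sided privacy-loss characterization. Your argument is the classical SVT analysis: integrate over the realized noisy threshold $t$, shift $t$ by $\Delta$ to dominate the $\bot$ factors at a cost of $e^{\diffp_1}$ from the Laplace threshold density, and bound the single $\top$ factor by a survival-function ratio whose exponent $B-A$ lies in $[0,2\Delta]$ in general and in $[0,\Delta]$ under monotonicity. The paper's route is the dual decomposition: it fixes the query noises $\nu_{i<k}$, then constructs an injective shift $(\nu,\nu_k)\mapsto(\nu+\Delta_1,\nu_k+\Delta_2)$ of the threshold noise and the $k$-th query noise with $\Delta_1=\max\{0,\tau'-\tau\}$ and $\Delta_2=\max\{0,\Delta_1-(f_k(x')-f_k(x))\}$, comparing densities pointwise. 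Both yield the same constants and hinge on the same two observations---all $\bot$ factors can be absorbed into a single threshold shift, and monotonicity pins down the sign of $f_k(x)-f_k(x')$, halving the $\diffp_2$ cost---but the paper's formulation is chosen because it survives one-sided noise (the shifted $\expo$ draws stay in the support) and because it exposes the per-pair quantities $\Delta_k(x,x')$ that power the sharper add/subtract and range-bounded results later in the appendix, whereas yours is the more self-contained and standard derivation for the Laplace case. Your two ``delicate points'' (validity of the survival-ratio bound for arbitrary $A\leq B$, and the sign restriction from monotonicity) are exactly the right ones.
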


Given that $\expo$ noise is one-sided $\lap$ noise, it can often be applied for comparative algorithms such as this one and report noisy max as well.
We will show this extension in Section~\ref{sec:generalized} for completeness.

\begin{corollary}\label{cor:exp_at_dp}
If the noise addition is $\expo$ then Algorithm~\ref{algo:lapAT} is $(\diffp_1 + 2 \diffp_2)$-DP for general queries and $(\diffp_1 + \diffp_2)$-DP for monotonic queries.
\end{corollary}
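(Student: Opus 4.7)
The plan is to adapt the Laplace proof of Lemma~1 (from Lyu et al.) to $\expo$ noise, exploiting the fact that $\expo(b)$ is simply the positive half of $\lap(b)$ with a density ratio under a positive shift that is identical to the Laplace one. Concretely, for any output sequence $a = (\bot,\ldots,\bot,\top)$ that halts at step $k$, I would write
\[
\Pr[\AT(x) = a] = \int p_{\hat T}(t)\,\Pr[f_k(x)+\nu_k \geq t]\,\prod_{i<k}\Pr[f_i(x)+\nu_i < t]\, dt,
\]
and perform the change of variables $\hat T = \tilde T - \Delta$ and $\nu_k = \tilde \nu_k - c\Delta$, where $c = 2$ in the general case and $c = 1$ in the monotonic case (with the direction of the shift chosen to match whether $f_i(x')\geq f_i(x)$ or $f_i(x')\leq f_i(x)$). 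The substitution should turn each indicator into the analogous indicator for $x'$, since $|f_i(x)-f_i(x')|\leq \Delta$.

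The core calculation is then to show that the density ratio $p_\expo(z-c\Delta;b)/p_\expo(z;b)$ introduced by this substitution equals $e^{c\Delta/b}$ whenever both arguments lie in the support $[0,\infty)$. Since $\expo$ has density $\frac{1}{b}e^{-z/b}\mathbf{1}[z\geq 0]$, this ratio is exactly $e^{c\Delta/b}$ provided $z \geq c\Delta$; outside this range the shifted density vanishes, which only restricts the region of integration to a subset of the one used by $\Pr[\AT(x')=a]$ and therefore preserves the inequality $\Pr[\AT(x)=a]\leq e^{\cdot}\Pr[\AT(x')=a]$ in the desired direction. Plugging $b=\Delta/\diffp_1$ for the threshold shift and $b=\Delta/\diffp_2$ for the final query shift gives $e^{\diffp_1 + c\diffp_2}$, i.e.\ the stated bounds for general ($c=2$) and monotonic ($c=1$) queries.

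The only subtlety, and the place I expect to spend the most care, is the direction of the shift relative to the one-sidedness of $\expo$. Unlike Laplace, I cannot shift a noise variable downward and still be in the support, so I need to verify that in each of the cases (threshold coupling, general queries, monotonic increasing, monotonic decreasing) the required shift is into the support. For the threshold variable, replacing $\hat T$ at $x$ with $\hat T + \Delta$ at $x'$ is a positive shift that keeps the noise non-negative; for the $k$-th query in the general case the shift $\nu_k' = \nu_k + 2\Delta$ is also positive; and for monotonic queries the shift $\nu_k' = \nu_k + \Delta$ (in the direction that makes $f_k(x')+\nu_k' \geq f_k(x)+\nu_k$) is positive as well. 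Because all required couplings go the ``right way'' for $\expo$, the same argument as in the Laplace case goes through, and neither the intermediate $\bot$ steps nor the truncation of the support cost anything beyond the multiplicative factor already accounted for. Concluding, the corollary follows by the same integration-by-coupling proof as Lemma~1, with Exponential densities in place of Laplace and the shifts chosen to preserve non-negativity.
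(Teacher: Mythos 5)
Your proof is correct and is in substance the same as the paper's: the paper derives this corollary from a generalized one-sided privacy-loss bound (Lemma~\ref{lem:exp_noise_helper}), but the proof of that lemma is exactly your coupling --- shift the threshold noise and the final query's noise upward, use the fact that the $\expo$ density ratio under a positive shift matches the $\lap$ one, and note that a vanishing shifted density only shrinks the integration region in the favorable direction. Your explicit check that every required shift is into the support $[0,\infty)$ (the one place where the one-sidedness of $\expo$ could break the Laplace argument) is precisely what the paper's injective mapping $\nu' = \nu + \Delta_1$, $\nu'_k = \nu_k + \Delta_2$ with $\Delta_1,\Delta_2 \geq 0$ is built to guarantee.
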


While the proofs for $\expo$ noise generally follow from the $\lap$ noise proofs, it will require different techniques to show that $\gum$ noise can be applied as well.
In particular, we utilize the known connection between adding $\gum$ noise and the exponential mechanism.

\begin{lemma}\label{lem:gumAT-DP}
If the noise addition is $\gum$ and $\diffp_1 = \diffp_2$ then Algorithm~\ref{algo:lapAT} is $(\diffp_1 + 2 \diffp_2)$-DP for general queries and $(\diffp_1 + \diffp_2)$-DP for monotonic queries.
\end{lemma}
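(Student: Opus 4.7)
The plan is to leverage the Gumbel-max trick, which provides an exact connection between argmax over Gumbel-noised scores and the exponential mechanism. The assumption $\diffp_1 = \diffp_2 = \diffp$ is what makes this tractable: it sets the threshold noise $\nu_0$ and each query noise $\nu_i$ to a common scale $b = \Delta/\diffp$, so they can be treated symmetrically inside a single Plackett--Luce computation.

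First I would recast the halting event in terms of ranks. Define $Y_0 = T + \nu_0$ and $Y_i = f_i(x) + \nu_i$ for $i \geq 1$, each a $\gum(b)$ variable shifted by a deterministic score. The event ``halt at index $k$'' is $Y_i < Y_0$ for every $i < k$ together with $Y_k \geq Y_0$; since this forces $Y_k > Y_0 > Y_1, \dots, Y_{k-1}$, it is equivalent to $Y_k$ being the largest and $Y_0$ the second-largest element of the finite collection $\{Y_0, Y_1, \dots, Y_k\}$. The noises for $i > k$ have no effect on the output. I would then invoke the Plackett--Luce identity for Gumbel-max, which says that for iid Gumbel-noised variables with scores $s_0, s_1, \ldots$, the probability of a prescribed top-two ordering $(j^\star, j_2)$ factors as $\tfrac{e^{s_{j^\star}/b}}{\sum_j e^{s_j/b}} \cdot \tfrac{e^{s_{j_2}/b}}{\sum_{j \neq j^\star} e^{s_j/b}}$. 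Summing over the irrelevant orderings of the remaining $k-1$ elements yields the closed form
\[
P_x(k) = \frac{e^{f_k(x)/b}}{D_k(x)} \cdot \frac{e^{T/b}}{D_{k-1}(x)}, \qquad D_m(x) = e^{T/b} + \sum_{i=1}^{m} e^{f_i(x)/b}.
\]

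Finally I would bound $P_x(k)/P_{x'}(k)$ factorwise. Sensitivity $\Delta$ gives $e^{f_i(x)/b}/e^{f_i(x')/b} \in [e^{-\diffp}, e^{\diffp}]$ for every $i$, and because the $e^{T/b}$ summand is unchanged a convex-combination argument forces $D_m(x)/D_m(x') \in [e^{-\diffp}, e^{\diffp}]$ for every $m$. In the general case this produces $e^{\diffp}$ from the first factor's numerator, $e^{\diffp}$ from its denominator, and $e^{\diffp}$ from the second factor's denominator, for a total of $e^{3\diffp} = e^{\diffp_1 + 2\diffp_2}$. For monotonic queries, assume without loss of generality that $f_i(x) \leq f_i(x')$ for all $i$; then the numerator of the first factor moves by a factor in $[e^{-\diffp}, 1]$ while its denominator ratio $D_k(x')/D_k(x)$ lies in $[1, e^{\diffp}]$, so they pull in opposite directions and the first factor's contribution is confined to $[e^{-\diffp}, e^{\diffp}]$. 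The second factor's denominator is likewise monotone, yielding a ratio in $[1, e^{\diffp}]$. Multiplying gives $P_x(k)/P_{x'}(k) \leq e^{2\diffp} = e^{\diffp_1 + \diffp_2}$, and the reverse direction follows by swapping $x$ and $x'$.

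The main obstacle I expect is certifying the ``opposite directions'' cancellation in the monotonic case, because the cross terms $e^{f_k(x)+f_i(x')} - e^{f_k(x')+f_i(x)}$ arising from an expansion of $e^{f_k(x)/b}D_k(x') - e^{f_k(x')/b}D_k(x)$ can be of mixed signs, so one cannot hope for exact algebraic cancellation. The cleanest route is the crude bound: use $e^{f_k(x)/b} \leq e^{f_k(x')/b}$ separately from $D_k(x') \leq e^{\diffp}D_k(x)$, which never couples the two cross-term effects and exactly yields the $e^{\diffp}$ ceiling on the first factor. A minor secondary issue is the ``never halt'' outcome on an infinite query stream, which I would handle by a passage to the limit $k \to \infty$ since the identity and the ratio bounds above are continuous in $k$.
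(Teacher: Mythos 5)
Your proposal is correct and follows essentially the same route as the paper: it derives the identical closed-form halting probability via the Gumbel-max/Plackett--Luce identity (the paper's Lemma~\ref{lem:gumToPeelingExp}) and then bounds the likelihood ratio factor by factor, with the monotone case exploiting that the numerator and denominator shifts have the same sign and so partially cancel. The paper merely packages these factorwise bounds into a general data-dependent quantity ${\diffp}(x,x')$ (Lemma~\ref{lem:generalized_at_gum}) before specializing to general and monotonic queries.
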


We defer the proof of this to the appendix, but it will follow from the fact that we can give a closed form expression for each probability.

\begin{lemma}\label{lem:gumToPeelingExp}
    Given a dataset $x$, a stream of queries $\{f_i\}$ and threshold $T$, for any outcome $\{\bot^{k-1},\top\}$ from running $\AT$ with $\gum$ noise and $\diffp = \diffp_1 = \diffp_2$, then

    \begin{multline*}
           \prob{}{\AT(x,\{f_i\},T) = \{\bot^{k-1},\top\}} = 
           \\
           \frac{\exp({\frac{\diffp}{\Delta} f_k(x)})}{\exp({\frac{\diffp}{\Delta} T) + \sum_{i=1}^k \exp(\frac{\diffp}{\Delta} f_i(x)})} \cdot 
    \frac{\exp({\frac{\diffp}{\Delta} T})}{\exp({\frac{\diffp}{\Delta} T) + \sum_{i=1}^{k-1} \exp(\frac{\diffp}{\Delta} f_i(x)})} 
    \end{multline*}


\end{lemma}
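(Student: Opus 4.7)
The plan is to exploit the Gumbel--max connection to the exponential mechanism. Recall that if $\eta_0, \ldots, \eta_m$ are iid $\gum(b)$ and $v_0, \ldots, v_m$ are fixed reals, then $\Pr[\argmax_i(v_i + \eta_i) = j] = \exp(v_j/b)/\sum_i \exp(v_i/b)$, i.e., exactly the exponential mechanism distribution on the $v_i$'s. Moreover, iid Gumbel noise induces a Plackett--Luce ordering: conditional on which index achieves the argmax, the argmax among the remaining indices is again a Gumbel--max draw over the residual value set. I would apply this two-step property once to $B_k$ and once to $A$, after recasting the halting event as a top-two ranking.

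First I would rewrite the event. Let $A = T + \nu_0$ and $B_i = f_i(x) + \nu_i$, all with common scale $b = \Delta/\diffp$. The output $\{\bot^{k-1},\top\}$ occurs precisely when $B_i < A$ for $i = 1, \ldots, k-1$ and $B_k \geq A$. Up to measure-zero ties (irrelevant under continuous noise), this is equivalent to the joint event that $B_k = \max\{A, B_1, \ldots, B_k\}$ and $A = \max\{A, B_1, \ldots, B_{k-1}\}$; i.e., in the ranking of $\{A, B_1, \ldots, B_k\}$, $B_k$ is first and $A$ is second.

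Next I would plug in. Setting $v_0 = T$ and $v_i = f_i(x)$ for $i \geq 1$, the Gumbel--max trick gives that $B_k$ is the top with probability $\exp(f_k(x)/b)/(\exp(T/b) + \sum_{i=1}^{k} \exp(f_i(x)/b))$. Conditional on this, the Plackett--Luce / successive-sampling property yields that $A$ is the argmax of the remaining $k$ noised values with probability $\exp(T/b)/(\exp(T/b) + \sum_{i=1}^{k-1} \exp(f_i(x)/b))$. Multiplying and using $1/b = \diffp/\Delta$ produces the formula in the lemma.

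The main obstacle I expect is justifying the conditional (second) factor cleanly, since it rests on the non-obvious fact that after peeling off the argmax, the remaining noised values still exhibit a Gumbel--max structure over the residual values. I would verify this either by a short direct computation (write the joint density of the $(\nu_0,\ldots,\nu_k)$, integrate out the variable attaining the maximum, and change variables) or by invoking the standard Plackett--Luce decomposition underlying the one-shot top-$k$ exponential mechanism in the DP literature. Once that is established, the remainder is pure substitution.
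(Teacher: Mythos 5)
Your proposal is correct and follows essentially the same route as the paper: recast the halting event as ``noisy $f_k(x)$ is largest and the noisy threshold is second largest among $\{T, f_1(x),\ldots,f_k(x)\}$,'' then invoke the Gumbel--max / peeling-exponential-mechanism equivalence (the paper cites Lemma~4.2 of \cite{durfee2019practical} for exactly the Plackett--Luce decomposition you flag as the one step needing justification). No gaps.
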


\begin{proof}
We add noise $\gum(\Delta / \diffp)$ to all $f_i(x)$ and the threshold $T$, and in order to output $\{\bot^{k-1},\top\}$ we must have that for the first $k$ queries, the noisy value of $f_k(x)$ is the largest and the noisy threshold is the second largest. It is folklore in the literature that adding $\gum$ noise (with the same noise parameter) and choosing the largest index is equivalent to the exponential mechanism. This can also be extended to showing that adding $\gum$ noise and choosing the top-k indices in order is equivalent to the peeling exponential mechanism. The peeling exponential mechanism first selects the top index using the exponential mechanism and removes it from the candidate set, repeating iteratively until accumulating the top-k indices. A formal proof of this folklore result is also provided in Lemma 4.2 of \cite{durfee2019practical}. Applying this result and the definition of the exponential mechanism gives the desired equality.

\end{proof}

Interestingly, we can similarly show that $\AT$ with $\gum$ noise is equivalent to an iterative exponential mechanism that we provide in Section~\ref{subsec:iterative_exp}.
In all of our empirical evaluations we will use $\expo$ noise in our calls to $\AT$ because it has the lowest variance for the same parameter.
While we strongly believe that $\expo$ noise will be most accurate under the same noise parameters, we leave a more rigorous examination to future work.
We also note that this examination was implicitly done for report noisy max between $\gum$ and $\expo$ noise in \cite{mckenna2020permute}, where their algorithm is equivalent to adding $\expo$ noise \cite{ding2021permute}, and $\expo$ noise was shown to be clearly superior.

\subsection{Improved privacy analysis for Sparse Vector Technique}

In this section, we further consider the iterative application of $\AT$ which is known as the sparse vector technique.
We show that for monotonic queries, we can improve the privacy analysis of sparse vector technique to obtain better utility for the same level of privacy.
Our primary metric for measuring privacy through composition will be zCDP which we defined in Section~\ref{sec:prelim} and has become commonly used particularly due to the composition properties.
We further show in Section~\ref{sec:generalized} that our analysis also enjoys improvement under the standard definition of differential privacy.
These improved properties immediately apply to our unbounded quantile estimation algorithm as our queries will be monotonic.

\begin{theorem}\label{thm:gumAT-zCDP}
If the queries are monotonic, then for any noise addition of $\lap$, $\gum$, or $\expo$ we have that Algorithm~\ref{algo:lapAT} is $\frac{1}{2}\diffp^2$-zCDP where $\diffp = \frac{\diffp_1}{2} + \diffp_2$.
If the noise addition is $\gum$ then we further require $\diffp_1 = \diffp_2$
\end{theorem}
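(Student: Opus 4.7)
The plan is to strengthen the pure-DP bound for Algorithm~\ref{algo:lapAT} under monotonicity from $(\diffp_1 + \diffp_2)$ (as in Lemma~\ref{lem:lapAT}) to an effective $(\diffp_1/2 + \diffp_2)$-DP guarantee, and then invoke Proposition~\ref{lem:dpTozCDP} to obtain the stated zCDP bound. A naive application of Proposition~\ref{lem:dpTozCDP} to the known monotonic bound would only yield $\frac{1}{2}(\diffp_1 + \diffp_2)^2$-zCDP, so the crucial step is to halve the contribution of the threshold noise from $\diffp_1$ to $\diffp_1/2$ in the effective privacy parameter.

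For the Gumbel case (where the theorem requires $\diffp_1 = \diffp_2$), I would use the closed-form expression of Lemma~\ref{lem:gumToPeelingExp}, which writes the probability of any output $\{\bot^{k-1},\top\}$ as a product of two exponential mechanism probabilities (a ``peeling'' EM in which the threshold appears as an additional candidate). Under monotonicity, both EM selections have monotonic utility functions: the threshold candidate has value $T$ in both $x$ and $x'$, while all query candidates shift in a single common direction. This lets each factor inherit the tighter DP/zCDP bounds known for monotonic EM, and the two factors compose to give the claim.

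For the Laplace and Exponential cases, I would work directly with the ratio $\Pr[\AT(x)=s]/\Pr[\AT(x')=s]$ for $s=\{\bot^{k-1},\top\}$, written as an integral over the threshold noise $\nu_0$ and the query noises $\nu_i$. WLOG taking $f_i(x) \leq f_i(x')$ with differences bounded by $\Delta$, the substitution $\nu_0 \mapsto \nu_0 - \Delta$ absorbs all of the query shifts from the ``no-halt'' events $\bot^{k-1}$ and leaves a single residual shift on the winning query $f_k$, which contributes at most $e^{\diffp_2}$ to the ratio. The substitution also produces a density-ratio factor on the threshold noise; ordinarily this contributes $e^{\diffp_1}$, but under monotonicity only the aligned direction of shift is exercised, which should halve the contribution to $e^{\diffp_1/2}$.

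The main obstacle will be rigorously justifying the halving of the threshold contribution in the Laplace and Exponential cases. Unlike the Gumbel case, where Lemma~\ref{lem:gumToPeelingExp} reduces the analysis to a clean product of monotonic EMs, the Laplace and Exponential arguments require directly manipulating the integral representation and carefully separating the two symmetric tails of $\nu_0$. I anticipate that the natural route is to show that, conditional on the outcome $s$, the monotonicity of the queries prevents the ``unused'' tail of the threshold noise from contributing to the Renyi divergence, which is precisely the type of query-structure property that the more general characterization promised in Section~\ref{sec:generalized} is set up to handle. Once the $(\diffp_1/2 + \diffp_2)$-DP bound is established uniformly across the three noise types, Proposition~\ref{lem:dpTozCDP} closes the proof.
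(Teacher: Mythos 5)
Your overall plan---prove an effective privacy parameter of $\diffp_1/2 + \diffp_2$ and then square-and-halve---identifies the right target, but the route you choose has a genuine gap: Algorithm~\ref{algo:lapAT} is \emph{not} $(\diffp_1/2+\diffp_2)$-DP for monotonic queries, so the first step of your argument would fail and Proposition~\ref{lem:dpTozCDP} cannot close the proof. The worst one-sided likelihood ratio is genuinely $e^{\diffp_1+\diffp_2}$: take monotonic queries with $f_i(x)-f_i(x')=\Delta$ for $i<k$ but $f_k(x)-f_k(x')=0$ (monotonicity only forces all differences to have the same sign, not the same magnitude). To map a transcript $\{\bot^{k-1},\top\}$ from $x'$ to one from $x$ you must shift the threshold noise by $\Delta$ to preserve the $k-1$ rejections \emph{and} shift the $k$-th query's noise by $\Delta$ to preserve the acceptance, costing $e^{\diffp_1}\cdot e^{\diffp_2}$. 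Your heuristic that ``only the aligned direction of shift is exercised, which should halve the contribution to $e^{\diffp_1/2}$'' does not hold at the level of the max over the two directions of neighboring; what monotonicity actually buys is \emph{asymmetry}: the loss is at most $\diffp_2$ in one direction and at most $\diffp_1+\diffp_2$ in the other.

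The paper converts exactly this asymmetry into the theorem via the range-bounded property rather than via pure DP. Lemma~\ref{lem:generalized_simplified_for_normal_queries} shows ${\diffp}(x,x')+{\diffp}(x',x)\leq \diffp_1+2\diffp_2$ for monotonic queries, Corollary~\ref{cor:generalized_at_is_br} then gives that $\AT$ is $(\diffp_1+2\diffp_2)$-range-bounded, and Proposition~\ref{prop:br_to_zcdp} converts $\diffp$-range-bounded into $\tfrac{1}{8}\diffp^2$-zCDP, yielding $\tfrac{1}{8}(\diffp_1+2\diffp_2)^2=\tfrac{1}{2}(\diffp_1/2+\diffp_2)^2$. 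In other words, the factor-of-two savings lives entirely in the sharper range-bounded-to-zCDP conversion (a $\tfrac18$ instead of a $\tfrac12$), not in a sharper DP parameter. Your Gumbel decomposition into two exponential-mechanism factors is a reasonable starting point, but to make it work you would still need to track the product of the two directional ratios (i.e., range-boundedness of each EM factor), not their individual DP guarantees. The text immediately following the theorem statement flags this explicitly: applying Proposition~\ref{lem:dpTozCDP} to the known monotone DP bound only gives $\diffp=\diffp_1+\diffp_2$, and ``further techniques'' --- namely the range-bounded analysis --- are required for the improvement.
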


Note that applying Proposition~\ref{lem:dpTozCDP} will instead give $\diffp = \diffp_1 + \diffp_2$. It will require further techniques to reduce this by $\diffp_1/2$ which will immediately allow for better utility with the same privacy guarantees.
This bound also follows the intuitive factor of 2 improvement that is often expected for monotonic queries.
The analysis will be achieved through providing a range-bounded property, a definition that was introduced in \cite{durfee2019practical}.
This definition is ideally suited to characterizing the privacy loss of selection algorithms, by which we can view $\AT$. 
As such it will also enjoy the improved composition bounds upon the standard differential privacy definition shown in \cite{dong2020optimal}.
This range bounded property was then unified with zCDP with improved privacy guarantees in \cite{cesar2021bounding}.

We give a proof of this theorem along with further discussion in Section~\ref{subsec:at_improvement}.
We will also give a generalized characterization of when we can take advantage of properties of the queries to tighten the privacy bounds in \AT. 
We will immediately utilize this characterization to improve the privacy guarantees for our method in Section~\ref{sec:fully_unbounded}.

\section{Unbounded Quantile Estimation}\label{sec:algo}

In this section we give our method for unbounded quantile estimation.
We focus upon the lower bounded setting which we view as most applicable to real-world problems, where non-negative data is incredibly common, particularly for datasets that contain information about individuals.
This method can be symmetrically apply to upper bounded data, and in Section~\ref{subsec:fully_unb} we will show how this approach can be extended to the fully unbounded setting.

In addition the primary application that we see for this algorithm, estimating the highest quantiles for clipping that's necessary for accurate sum and mean estimation, will utilize this variant of our method.
More specifically, even if the data is not lower bounded, clipping is often done based upon Euclidean distance from the origin which will be non-negative for all datapoints by definition.
In this way, our approach can also apply to high-dimensional settings when the data first needs to be contained within a high-dimensional ball around the origin.

For the quantile problem we will assume our data $x \in \R^n$. 
Given quantile $q \in [0,1]$ and dataset $x \in \R^n$ the goal is to find $t \in \R$ such that $|\{x_j \in x| x_j < t\}|$ is as close to $ q  n$ as possible.

\subsection{Unbounded quantile mechanim}

The idea behind our unbounded quantile estimation algorithm will be a simple guess-and-check method that will just invoke $\AT$. 
In particular, we will guess candidate values $t$ such that $|\{x_j \in x| x_j < t\}|$ is close to $qn$.
We begin with the smallest candidate, recall that we assume lower bounded data here and generalize later, and iteratively increase by a small percentage.
At each iteration we check if $|\{x_j \in x| x_j < t\}|$ has exceeded $qn$, and terminate when it does, outputting the most recent candidate.
In order to achieve this procedure privately, we will simply invoke $\AT$.
We also discuss how this procedure can be achieved efficiently in Section~\ref{subsec:fast_implementation}.

Thus we give our algorithm here where the lower bound of the data, $\ell$, is our starting candidate and $\beta$ is the scale at which the threshold increases. 
Given that we want our candidate value to increase by a small percentage it must start as a positive value.
As such we will essentially just shift the data such that the lower bound is instead 1.
In all our experiments we set $\beta = 1.001$, so the increase is by 0.1\% each iteration.

\begin{algorithm}
\caption{Unbounded quantile mechanism \label{algo:posSingleQuant}}
\begin{algorithmic}[1]
\REQUIRE Input dataset $x$, a quantile $q$, a lower bound $\ell$, and parameter $\beta > 1$
\STATE Run $\AT$ with $x$,  $T = qn$ and $f_i(x) = |\{x_j \in x| x_j - \ell + 1 < \beta^i\}|$
\STATE Output $\beta^k + \ell - 1$ where $k$ is the query that $\AT$ halted at
\end{algorithmic}
\end{algorithm}

Given that our method simply calls $\AT$ it will enjoy all the privacy guarantees from Section~\ref{subsec:at_algo}.
Furthermore we will show that our queries are monotonic.

\begin{lemma}
For any sequence of thresholds $\{t_i \in \R \}$ let $f_i(x) = |\{x_j \in x| x_j < t_i\}|$ for all $i$. For any neighboring dataset under Definition~\ref{def:neighbor}, we have that $\{f_i\}$ are monotonic queries with sensitivity 1.
\end{lemma}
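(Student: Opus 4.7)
The plan is to fix neighboring datasets $x, x'$ and argue that both sensitivity and monotonicity follow from inspecting how a single data-point change affects each threshold count separately. Under Definition~\ref{def:neighbor} (swap), there is a unique coordinate $k$ with $x_k \neq x'_k$ and $x_j = x'_j$ for $j \neq k$. I would write $f_i(x) = \sum_{j} \mathbf{1}[x_j < t_i]$ and compare term-by-term with $f_i(x')$.

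For sensitivity, every term with $j \neq k$ agrees in $x$ and $x'$, and the $j=k$ term is an indicator, so the two sums differ by at most $1$ for every $i$. This gives $|f_i(x) - f_i(x')| \leq 1$, i.e. sensitivity $1$.

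For monotonicity I would assume without loss of generality that $x_k \leq x'_k$ (otherwise swap the roles of $x$ and $x'$). Then for any threshold $t_i$, the indicator $\mathbf{1}[x_k < t_i] \geq \mathbf{1}[x'_k < t_i]$: indeed, if $x'_k < t_i$ then also $x_k < t_i$ so both indicators are $1$, while if $x_k \geq t_i$ then also $x'_k \geq t_i$ so both are $0$; in the remaining case $x_k < t_i \leq x'_k$ the indicator decreases from $1$ to $0$. Summing over $j$ yields $f_i(x) \geq f_i(x')$, and crucially this holds for every $i$ since the choice of direction was determined by the fixed pair $(x_k, x'_k)$, not by $i$. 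This matches Definition~\ref{def:monotonic}.

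The main ``obstacle,'' such as it is, is purely a bookkeeping one: making clear that the direction of the inequality in Definition~\ref{def:monotonic} is pinned down by the single differing coordinate and is therefore uniform across all queries $f_i$, regardless of where the thresholds $t_i$ sit relative to $x_k$ and $x'_k$. Once that is spelled out, both conclusions are immediate from the indicator-sum rewriting.
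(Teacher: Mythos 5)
Your proof is correct and follows essentially the same route as the paper: isolate the single differing coordinate, observe that all other terms cancel (giving sensitivity $1$), and do the same three-way case analysis on where $t_i$ falls relative to the two differing values to get a uniform direction of inequality across all $i$. The indicator-sum phrasing is just a cosmetic repackaging of the paper's set-membership argument.
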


\begin{proof}
Let $x_j$ be the value that differs between neighbors $x,x'$. Define $\calS_{x,t} = \{x_j \in x| x_j < t\}$. We consider the case $x'_j > x_j$ and the other will follow symmetrically. For all thresholds $t_i \in (-\infty,x_j]$ we have $x_j \notin \calS_{x,t_i}$ and $x'_j \notin \calS_{x',t_i}$, so $f_i(x) = f_i(x')$. For all thresholds $t_i \in (x'_j, \infty)$ we have $x_j \in \calS_{x,t_i}$ and $x'_j \in \calS_{x',t_i}$, so $f_i(x) = f_i(x')$. Finally, for all thresholds $t_i \in (x_j, x'_j]$ we have $x_j \in \calS_{x,t_i}$ and $x'_j \notin \calS_{x',t_i}$, so $f_i(x) = f_i(x') + 1$. Therefore, $f_i(x) \geq f_i(x')$ for all $i$, and the sensitivity is 1.

\end{proof}

Note that for the \textit{swap} definition of neighboring the threshold remains constant. We will discuss how to extend our algorithm and further improve the privacy bounds for the \textit{add-subtract} definition of neighboring in Section~\ref{sec:add_subtract}.



\subsection{Simple and scalable implementation}\label{subsec:fast_implementation}

In this section we show how our call to $\AT$ can be done with a simple linear time pass through the data, and each subsequent query takes $O(1)$ time. 
While the running time could potentially be infinite, if we set $\beta = 1.001$, then after 50,000 iterations our threshold is already over $10^{21}$ and thus highly likely to have halted.
Unless the scale of the data is absurdly high or the $\beta$ value chosen converges to 1, our guess-and-check process will converge reasonably quickly.
\footnote{Note that the process can also be terminated at any time without affecting the privacy guarantees.}

In our initial pass through the data, for each data point $x_j$ we will find the index $i$ such that $ \beta^i \leq x_j - \ell + 1 <  \beta^{i+1}$, which can be done by simply computing $\lfloor \log_\beta(x_j - \ell + 1) \rfloor$ as our lower bound ensures $x_j - \ell + 1 \geq 1$. Using a dictionary or similar data structure we can efficiently store $| \{ x_j \in x | \beta^i \leq x_j - \ell + 1 < \beta^{i+1}\} |$ for each $i$ with the default being 0. 
This preprocessing does not require sorted data and takes $O(n)$ arithmetic time, where we note that the previous algorithms also measure runtime arithmetically.

Finally, for each query if we already have $| \{ x_j \in x | x_j - \ell + 1 < \beta^{i}\} |$, then we can add $| \{ x_j \in x | \beta^i \leq x_j - \ell + 1 < \beta^{i+1}\} |$ in O(1) time to get $| \{ x_j \in x | x_j - \ell + 1  < \beta^{i+1}\} |$.
Inductively, each query will take O(1) time.

\subsection{Extension to multiple quantiles}

The framework for computing multiple quantiles set up in \cite{kaplan2022differentially} is agnostic to the technique used for computing a single quantile.
Their method will first compute the middle quantile and split the data according to the result.
Through recursive application the number of levels of computation will be logarithmic.
Furthermore, at each level we can see that at most one partition of the data will differ between neighbors, allowing for instead a logarithmic number of compositions.
As such our approach can easily be applied to this recursive splitting framework to achieve the same improvements in composition.
This will require some minor updating of their proofs to the \textit{swap} definition that we will do in Section~\ref{sec:fully_unbounded}.

\section{Maximal Quantiles Estimation}\label{sec:comparison}

In this section, we specifically consider applying our unbounded quantile estimation to computing the highest quantiles.
This is commonly needed for clipping the dataset to privately compute the sum and mean, which we will empirically test in Section~\ref{subsec:sum_experiments}.
Our primary goal here is to build intuition upon why our approach fundamentally gives more accurate and robust estimations of the highest quantiles when only loose upper bounds on the data are known.
We first give more details upon the previous methods, particularly the EMQ method.
Then we give a more detailed look at how these methods are negatively affected by loose upper bounds and why ours performs well in comparison.



\subsection{Previous techniques}\label{subsec:previous_review}

The most effective and practical previous method for quantile estimation, EMQ, builds a distribution over an assumed bounded range $[a,b]$ through an invocation of the exponential mechanism.
Assuming the data is sorted, it will partition the range based upon the data and select an interval $[x_{j},x_{j+1}]$ with probability proportional to 

\[
\exp\left(-\frac{\diffp | j - qn |}{2}\right)\left(x_{j+1} - x_{j}\right)
\]

where the intervals $[a,x_1]$ and $[x_n,b]$ are also considered.
A uniformly random point is then drawn from within the selected interval.
Note that this utility function is not monotonic and will not enjoy the same improved privacy bounds.



There are other approaches that recursively partition the data range while computing statistics on each partition, then aggregate these statistics across logarithmic levels to reduce the noise for quantile computation \cite{dwork2010differential, chan2011private}.
Another technique is to instead utilize the local sensitivity while maintaining privacy by using a smoothing procedure that can also be used for computing quantiles \cite{nissim2007smooth}.

\subsection{Effect of bounds upon maximum quantiles}

As mentioned, EMQ will construct a probability distribution over the range $[a,b]$. 
The corresponding PDF is a unimodal step function of the intervals defined by the data with the peak interval $[x_j,x_{j+1}]$ being such that $j$ closest to $qn$.
Note then that if $b >> x_n$ then the probability of selecting $[x_n,b]$ increases dramatically.
The exponential decay as the PDF moves away from the peak interval diminishes the impact of $[x_n,b]$ significantly if $n$ is far away from $qn$.
However, for computing the highest quantiles, we want $q$ close to 1 by definition, and the looseness of the upper bound drastically effects the accuracy.

In contrast, as soon as the candidate value for our method exceeds the true quantile value, each successive query will have an output of at least $qn$ which is the threshold. 
The probability of continuing for $\lambda$ more queries then decreases in $\lambda$.
\footnote{We expect $\lambda$ to most likely be small and this only adds a factor of $\beta^\lambda$ additional error.
Note that setting $\beta$ too high, such as $\beta=2$ which gives a similar algorithm to \cite{chen2016differentially}, implies that even taking five additional queries will lead to a value that is over 32 times too large.}
For ease of comparison, we can modify our Algorithm~\ref{algo:posSingleQuant} such that in the last step, the output is drawn uniformly from $[\beta^{k-1}, \beta^k]$, assuming $\ell = 1$ for simplicity.
This would then create a continuous probability distribution that would similarly be a step function with each interval $[\beta^{k-1}, \beta^k]$ being a step.

For better intuition we plot the approximate PDF of each in Figure~\ref{fig:pdfs}. We use the $\gum$ noise in the call to $\AT$ to take advantage of the closed form expression in Lemma~\ref{lem:gumToPeelingExp} for easier PDF computation.

\begin{figure}[htbp]
  \centering
  \begin{subfigure}[b]{0.45\textwidth}
    \includegraphics[width=\textwidth]{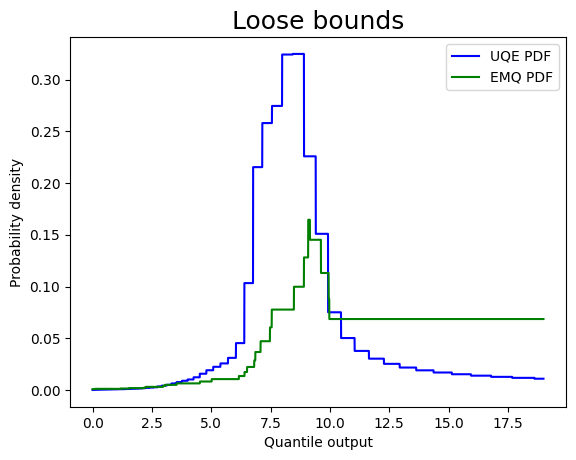}
  \end{subfigure}
  \begin{subfigure}[b]{0.45\textwidth}
    \includegraphics[width=\textwidth]{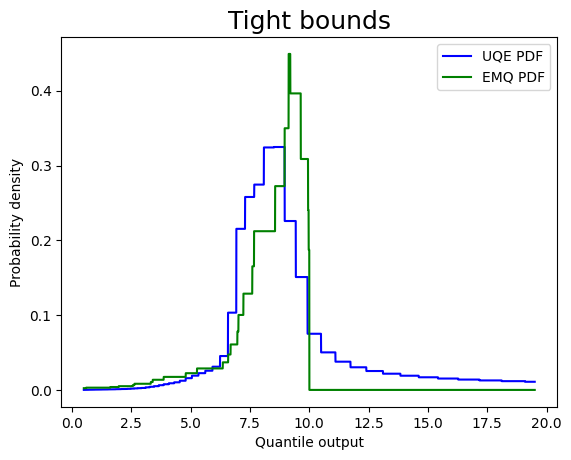}
  \end{subfigure}
  \caption{Illustrative example of how the approximate PDF of the previous method, EMQ is affected by looser upper bounds compared to our unbounded method UQE. A small amount of data was drawn uniformly from $[0,10]$ and we set $q = 0.9$, so accurate output would be about 9. The left and right side assumed a range of $[0,20]$ and $[0,10]$ respectively for EMQ.}
  \label{fig:pdfs}
\end{figure}

As we can see in Figure~\ref{fig:pdfs}, the upper bound, $b$, increasing from 10 to 20 dramatically increases the probability of selecting a point in $[x_n,b]$ which changes the normalization for the other intervals, significantly altering the PDF of EMQ.
Given that our method is unbounded, the PDF for UQE stays the same in both figures and sees the expected exponential decay once the candidate quantile passes the true quantile.




There are also alternative methods for bounding the data range by computing the interquartile range more accurately and scaling up.
However these methods make strong assumptions upon the data distribution being close to the normal distribution as well as bounds upon the third moment \cite{smith2011privacy}.
For real-world datasets, the tails of the data can vary significantly, and scaling up in a data-agnostic manner can often be similarly inaccurate.
We also found this to be true in our experiments.

The smooth sensitivity framework will run into similar issues when $q$ is closer to 1 because fewer data points need to be changed to push the quantile value to the upper bound.
If this upper bound is large, then the smooth local sensitivity will still be substantial. 
Aggregate tree methods are slightly more robust to loose upper bounds as they aren't partitioning specific to the data.
However, for accurate estimates the partitioning of the range still requires the data be well-distributed across the evenly split partitions, which is significantly affected by loose upper bounds.


\section{Empirical Evaluation}\label{sec:experiments}

In this section we empirically evaluate our approach compared to the previous approaches. 
We first go over the datasets and settings used for our experiments which will follow recent related work \cite{gillenwater2021differentially,kaplan2022differentially}.
Next we evaluate how accurately our method estimates quantiles for the different datasets in the bounded setting.
Finally, we will consider the application of computing differentially private sum, which also gives mean computation, and show how our algorithm allows for a significantly more robust and accurate method when tight bounds are not known for the dataset.

\subsection{Datasets}

We borrow the same setup and datasets as \cite{gillenwater2021differentially,kaplan2022differentially}. 
We test our algorithm compared to the state-of-the-art on six different datasets. 
Two datasets will be synthetic. One draws 10,000 data points from the uniform distribution in the range $[-5,5]$ and the other draws 10,000 data points from the normal distribution with mean zero and standard deviation of five.
Two datasets will come from \cite{goodreads} with 11,123 data points, where one has book ratings and the other has book page counts. 
Two datasets will come from \cite{censusdata} with 48,842 data points, where one has the number of hours worked per week and the other has the age for different people.
We provide histograms of these datasets for better understanding in Figure~\ref{fig:histograms}.

\begin{figure}[htbp]
  \centering
  \subfloat[Uniform]{\includegraphics[width=0.3\linewidth]{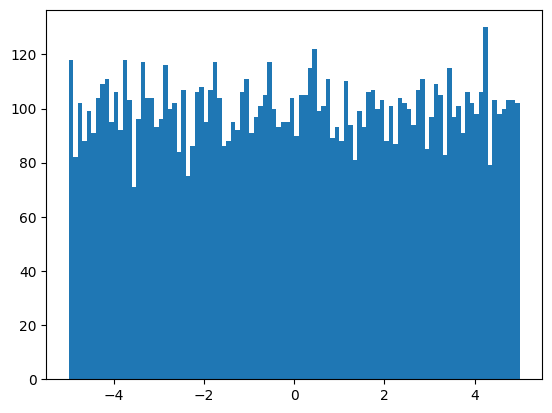}}
  \subfloat[Gaussian]{\includegraphics[width=0.3\linewidth]{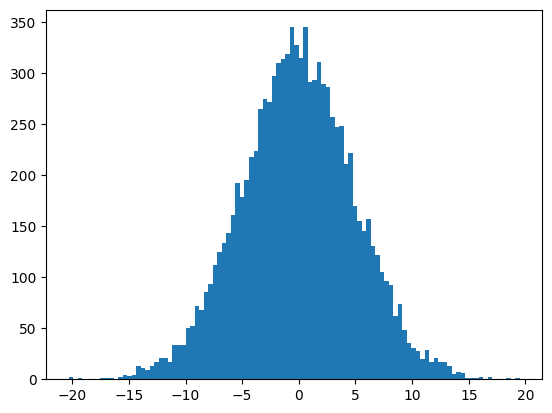}}
  \subfloat[Book ratings]{\includegraphics[width=0.3\linewidth]{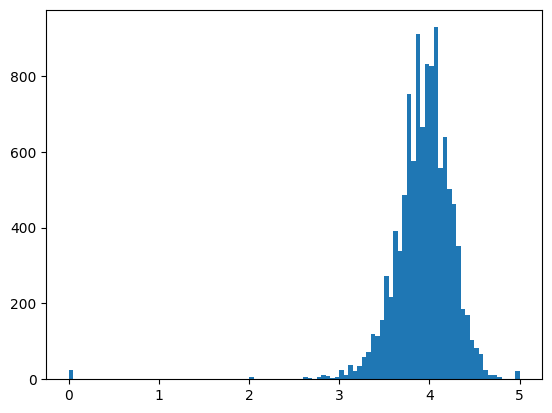}} \\
  \subfloat[Book pages]{\includegraphics[width=0.3\linewidth]{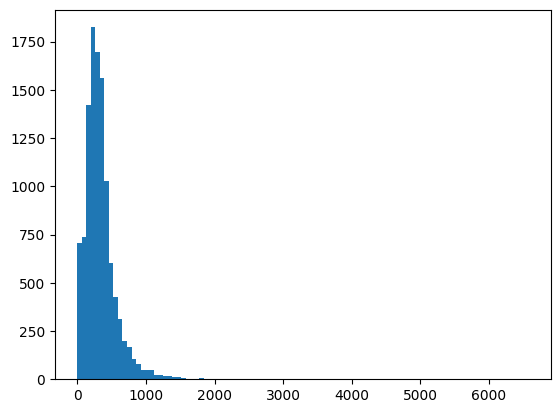}}
  \subfloat[Census ages]{\includegraphics[width=0.3\linewidth]{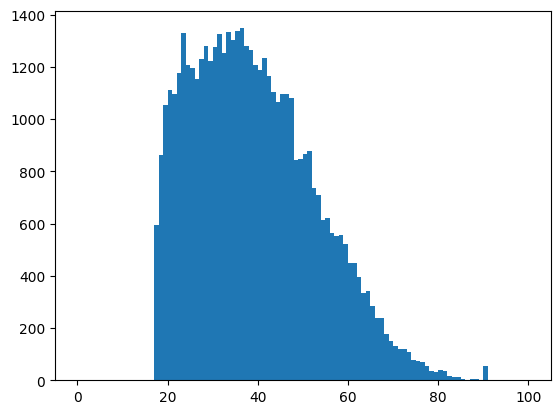}}
  \subfloat[Census hours]{\includegraphics[width=0.3\linewidth]{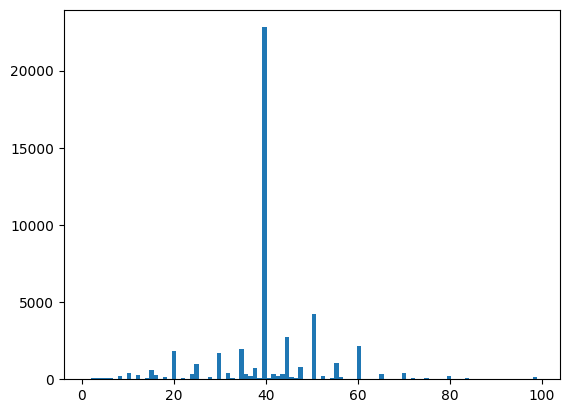}} \\
  \caption{Histograms for each of our datasets.}
  \label{fig:histograms}
\end{figure}

\subsection{Quantile estimation experiments}\label{subsec:quantile_experiments}

For our quantile estimation experiments, for a given quantile $q \in [0,1]$ we consider the error of outcome $o_q$ from one of the private methods to be $|o_q - t_q|$ where $t_q$ is the true quantile value.
We use the in-built quantile function in the numpy library with the default settings to get the true quantile value.
As in previous related works, we randomly sample 1000 datapoints from each dataset and run the quantile computation on each method.
This process is then iterated upon 100 times and the error is averaged.
We set $\diffp = 1$ as in the previous works, which will require setting $\diffp_1 = \diffp_2 = 1/2$ for the call to $\AT$ in our method.

We will also tighten the ranges to the following, $[-5,5]$ for the uniform dataset, $[-25,25]$ for the normal dataset, $[0,10]$ for the ratings dataset, $[0,10000]$ for the pages dataset, $[0,100]$ for the hours dataset, and $[0,100]$ for the ages dataset.
Given that EMQ suffers performance when many datapoints are equal we add small independent noise to our non-sythetic datasets.
This noise will be from the normal distribution with standard deviation 0.001 for the ratings dataset and 0.1 for the other three that have integer values. 
Our method does not require the noise addition but we will use the perturbed dataset for fair comparison.
True quantiles are still computed upon the original data.
For the datasets with integer values we rounded each output to the nearest integer. 
For our method we set $\beta = 1.001$ for all datasets. 

For these experiments we only compare our method UQE and the previous method EMQ, using the implementations from \cite{gillenwater2021differentially}.
The other procedures from Section~\ref{subsec:previous_review} are more generalized and thus for this specific setting do not perform nearly as well which can be seen in the previous experiments \cite{gillenwater2021differentially,kaplan2022differentially}, so we omit them from our results.
For this experiment we consider estimating each quantile from 5\% to 95\% at a 1\% interval. In Figure~\ref{fig:quantiles} we plot the mean absolute error of each normalized by the mean absolute error of UQE to make for an easier visualization. 

\begin{figure}[htbp]
  \centering
  \begin{subfigure}[b]{0.3\textwidth}
    \includegraphics[width=\textwidth]{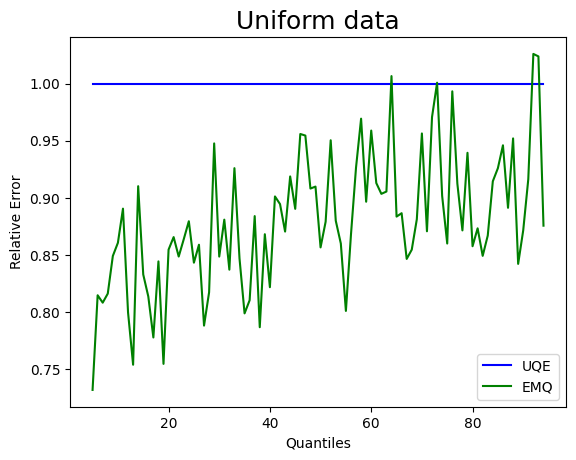}
  \end{subfigure}
  \begin{subfigure}[b]{0.3\textwidth}
    \includegraphics[width=\textwidth]{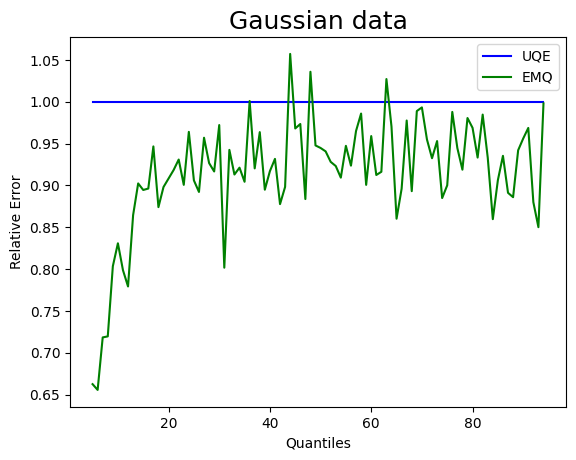}
  \end{subfigure}
  \begin{subfigure}[b]{0.3\textwidth}
    \includegraphics[width=\textwidth]{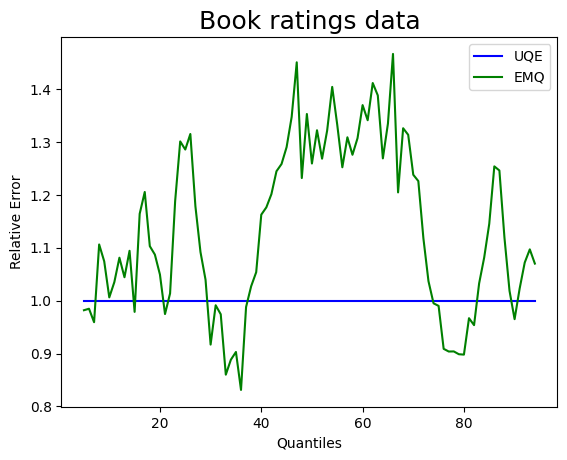}
  \end{subfigure} \\
  \begin{subfigure}[b]{0.3\textwidth}
    \includegraphics[width=\textwidth]{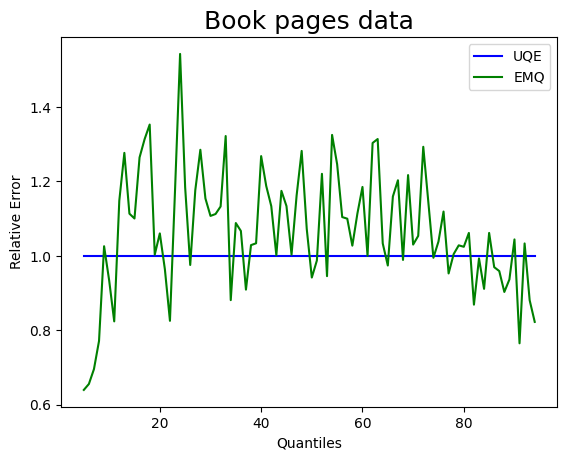}
  \end{subfigure}
  \begin{subfigure}[b]{0.3\textwidth}
    \includegraphics[width=\textwidth]{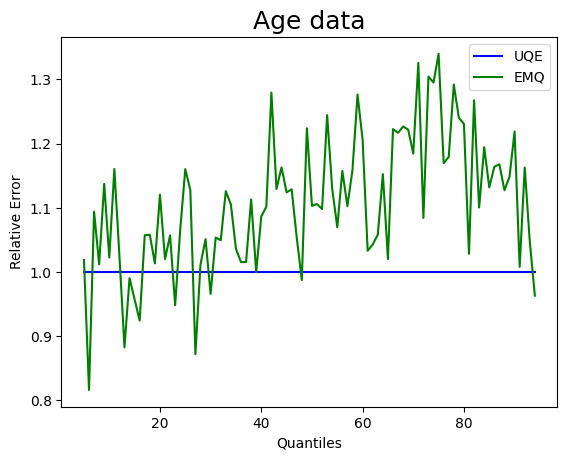}
  \end{subfigure}
  \begin{subfigure}[b]{0.3\textwidth}
    \includegraphics[width=\textwidth]{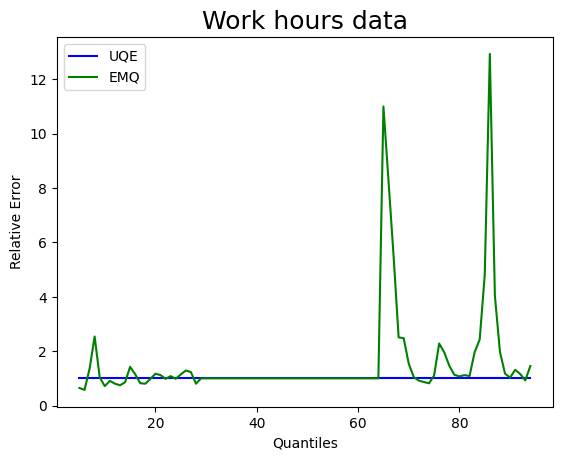}
  \end{subfigure}
  \caption{Plots of UQE = (mean absolute error UQE) / (mean absolute error UQE) and EMQ = (mean absolute error EMQ) / (mean absolute error UQE). Normalizing in this way will make for an easier visualization. When EMQ is below UQE then it's error is lower, and when EMQ is above UQE then it's error is higher}
  \label{fig:quantiles}
\end{figure}

As we can see in Figure~\ref{fig:quantiles}, EMQ consistently performs better on synthetic data, and UQE often performs better on the non-synthetic data.
This fits with our intuition that UQE will be best suited to situations where the data is unstructured and less is known about the dataset beforehand.
Our guess-and-check methodology is designed to better handle ill-behaving datasets.
However if more structure is known about the dataset, then these assumptions can be added into EMQ and other algorithms that are more specific to the data to achieve better performance.


\subsection{Sum estimation experiments}\label{subsec:sum_experiments}

As the primary application of our method we will also be considering differentially private sum computation, which can thereby compute mean as well. 
We will be using the following $2\diffp$-DP general procedure for computing the sum of non-negative data:

\begin{enumerate}
    \item Let $\Delta = \texttt{PrivateQuantile}(x,q,\diffp)$ where \texttt{PrivateQuantile} is any differentially private computation algorithm and $q \approx 1$.

    \item Output $\lap(\Delta/\diffp) + \sum_{j=1}^n \min(x_j,\Delta)$
\end{enumerate}

We further test this upon the non-synthetic datasets.
For large-scale privacy systems that provide statistical analysis for a wide variety of datasets, if we use a $\texttt{PrivateQuantile}$ that requires an upper bound then we ideally want this bound to be agnostic to the dataset.
The is particularly true for sum computations upon groupby queries as the range and size can differ substantially amongst groups.
As such, we fix the range at $[0,10000]$ to encompass all the datasets.
We will otherwise use the same general setup as in Section~\ref{subsec:quantile_experiments}.

We will measure the error of this procedure as the absolute error of the output and the true sum of the dataset.
For each of the 100 iterations of choosing 1000 samples randomly from the full dataset, we will also add $\lap$ noise 100 times.
Averaging over all these iterations will give our mean absolute error.

\begin{table*}[ht!]
\resizebox{\textwidth}{!}{
\begin{tabular}{lllllll}
\hline
Privacy                       & Method                        & Ratings data  & Pages data & Ages data & Work hours data  \\ \hline
  & UQE               & $4.78_{0.21}$          & $4385.23_{2077.16}$           & $103.05_{16.04}$    & $180.48_{44.92}$        \\
                          $\diffp = 1$  & EMQ         &  $5.73_{0.54}$               &   $4324.38_{2343.25}$     &  $187.06_{33.55}$      &  ${339.06}_{75.82}$            \\ 
                            & AT     &   ${8.75}_{0.31}$                &        $4377.45_{2340.93}$        &   ${293.11}_{117.32}$      &    $471.98_{174.07}$          \\ \hline \hline
 & UQE               & $9.22_{0.31}$            & $7102.34_{3093.13}$          & $180.61_{27.03}$     & $277.89_{77.60}$         \\
                         $\diffp = 0.5$   & EMQ           &  $6906.13_{7123.36}$      &         $7601.47_{3963.52}$      &  $678.22_{1931.11}$        &  ${2131.80}_{4669.11}$          \\ 
                            & AT    &   ${29.01}_{115.04}$      &           $7491.97_{4367.31}$      &  ${473.19}_{238.19}$      &  ${582.58}_{369.29}$      \\ \hline \hline
  & UQE             & $44.59_{1.79}$            & $21916.37_{6423.75}$             & $821.77_{157.68}$     & $981.10_{219.66}$       \\
                         $\diffp = 0.1$   & EMQ           &    $45861.79_{28366.46}$             &    $46552.09_{27944.18}$             &  $50558.32_{28843.72}$       &   $47185.58_{29437.22}$          \\ 
                            & AT    &  ${11351.77}_{21423.40}$               &            $30830.49_{20422.05}$    &  ${14490.06}_{25268.71}$      &        ${9928.18}_{20159.17}$     \\ \hline \hline
\end{tabular}}
\caption{Mean absolute error for differentially private sum estimation. The standard deviation over the 100 iterations is also provided for each in the subscript. UQE = Our unbounded quantile estimation method. EMQ = The exponential mechanism based quantile estimation method. AT = The aggregate tree method for quantile estimation. For our method we only use $q = 0.99$. For the others we use the best performance for $q \in \{0.95,0.96,0.97,0.98,0.99\}$.}
\label{tab:sum_results}
\end{table*}

We will run this procedure with $\diffp \in \{0.1, 0.5, 1\}$. Further we will use $q = 0.99$ always for our method, but give the absolute error for the best performing $q \in \{0.95,0.96,0.97,0.98,0.99\}$ for the other methods.
It is important to note that this value would have to be chosen ahead of time, which would add more error to the other methods.
The previous methods we consider here are again the EMQ, but also the aggregate tree (AT) methods, were we use both the implementation along with generally best performing height (3) and branching factor (10) from \cite{gillenwater2021differentially}. We also implemented the bounding technique using inner quartile range within Algorithm 1 of \cite{smith2011privacy}, but this performed notably worse than the others so we omitted the results from our table.

As we can see in Table~\ref{tab:sum_results}, our method is far more robust and accurate.
Furthermore, for our method the choice of $q$ remained constant and we can see that our results still stayed consistently accurate when $\diffp$ changed.
Note that the noise added to the clipped sum is also scaled proportional to $\diffp$ so the amount the error increased as $\diffp$ decreased for our method is what would be expected proportionally.
Once again these findings are consistent with our intuition.
Our technique is more robust to differing datasets and privacy parameters, and especially better performing for this important use case.

Recall that the sampled data had size 1000 so dividing accordingly can give the error on mean estimates.
There is a long line of literature on differentially private mean estimation.\footnote{See \cite{kamath2022private} for an extensive review of this literature}
To our knowledge, all of these more complex algorithms either require assumptions upon the data distribution, such as sub-Gaussian or bounded moments, or bounds upon the data range or related parameters, and most often require both.
These results also focus upon proving strong theoretical guarantees of accuracy with respect to asymptotic sample complexity.
We first note that our approach will provide better initial bounds upon the data as seen in our experiments, which directly improve the theoretical guarantees in the results that require a data range.
But also our focus here is upon practical methods that are agnostic to data distributions and more widely applicable to real-world data.
Consequently, a rigorous comparison among all of these methods would be untenable and outside the scope of this work.

\subsection{Parameter tuning}\label{subsec:parameters}

We kept our $\beta$ parameter fixed in all experiments for consistency but also to make our method data agnostic.
However, our choice was aggressively small in order to achieve higher precision in the inner quantile estimation comparison in Section~\ref{subsec:quantile_experiments}.
This choice was still highly resilient to changes in $\diffp$ for our sum experiments as we see our error only scaled proportional to the increase in noise.
But for more significant decreases in $\diffp$ or in the data size, i.e. the conditions under which all private algorithms suffer substantial accuracy loss, this choice of $\beta$ could be too small.
Those settings imply that the noise added is larger and the distance between queries and threshold shrinks, so our method is more likely to terminate earlier than desired. 
Smaller $\beta$ values will then intensify this issue as the candidate values increase more slowly.
For the clipping application, we generally think using a value of $\beta = 1.01$ would be a more stable choice.
In fact, replicating our empirical testing with this value actually improves our results in Table~\ref{tab:sum_results}.
Furthermore, increasing $\beta$ will also reduce the number of queries and thus the computational cost.
In general, we find that setting $\beta \in [1.01,1.001]$ gives good performance with $\beta = 1.01$  as a default for computational efficiency.
We leave to future work a more thorough analysis of this parameter to fully optimize setting it with relation to data size and privacy parameter.
Additionally, all our methods and proofs only require an increasing sequence of candidate values and it's possible that other potential sequences would be even more effective.
For example, if tight upper and lower bounds are known on the data, such as within the recursive computation of multiple quantiles, then it likely makes more sense to simply uniformly partition the interval and check in increasing order.
But we leave more consideration upon this to future work as well.




Our choice of $q = 0.99$ in the differentially private sum experiments was also to maintain consistency with the choices for the previous method but also to keep variance of the estimates lower.
This will create some negative bias as we then expect to clip the data.
As we can see in our illustrative example Figure~\ref{fig:pdfs}, the PDF will exponentially decrease once we pass the true quantile value, but will do so less sharply once all the queries have value $n$.
Accordingly, setting $q=1.0$ would add slightly more variance to the estimation but initial testing showed improvement in error.
However, if the user would prefer slightly higher variance to avoid negative bias, then setting the threshold at $n$ or even $n + 1/\diffp$, would make it far more likely that the process terminates with a value slightly above the maximum.
This is particularly useful for heavy-tailed data, where clipping at the 99th percentile can have an out-sized impact on the bias.

\section{Acknowledgements}\label{sec:acks}

We thank our colleagues Matthew Clegg, James Honaker, and Kenny Leftin for helpful discussions and feedback.

\bibliography{references}

\appendix

\section{Improved analysis for Sparse Vector Technique}\label{sec:generalized}

In this section we complete the analysis for improving the privacy bounds for sparse vector technique with monotonic queries.
We will first give a more generalized characterization of query properties by which we can improve the privacy bounds for $\AT$. 
While this characterization is more complex, we will immediately utilize it in Section~\ref{sec:fully_unbounded} to improve the privacy bounds of our methods under an alternate common definition of neighboring.
Additionally, the class of queries that we apply it to are not monotonic, so improvements can be made in a multitude of settings.
Next we will review the definition of range-bounded, a property of privacy mechanisms that can be used to improve composition bounds, and apply it to our setting.
Finally we will utilize these results to show that the privacy analysis of the sparse vector technique can be improved more generally, but also specifically for monotonic queries.

\subsection{Generalized characterization}\label{subsec:generalized}

We provide a more general characterization for the privacy loss of $\AT$ that is specific to the input stream of queries.
This will be achieved by providing a one-sided privacy parameter for each pair of neighboring datasets where order matters.
For a given pair $x,x'$, our goal will be to upper bound the output distribution of $x$ by the output distribution of $x'$ up to an exponential factor of the one-sided privacy parameter. 
The specificity of the parameter to the neighboring datasets will reduce conciseness but it is for this reason that we will be able to further improve privacy analysis.
This precise characterization will also help improve bounds for our method in Section~\ref{sec:fully_unbounded} without requiring onerous analysis.



\begin{definition}\label{def:one_sided_priv}
For a stream of queries $\{f_i\}$ with sensitivity $\Delta$, we define our one-sided privacy loss for neighboring data sets $x,x'$ as

\[
{\diffp}(x,x') = \max_{k} \left(\frac{\diffp_1}{\Delta} {\Delta}_k(x,x') + \frac{\diffp_2}{\Delta} \max\{0, {\Delta}_k(x,x') - (f_k(x') - f_k(x))\}\right)
\]

where ${\Delta}_k(x,x') = \max_{i < k} \max\{0, f_i(x') - f_i(x)\} $.
\end{definition}

Given that our definition is meant to encompass the one-sided privacy loss for $\AT$ we will now prove that fact for $\expo$ and $\gum$ noise (and $\lap$ follows equivalently to $\expo$).
We first prove this conjecture for $\expo$ noise.

\begin{lemma}\label{lem:exp_noise_helper}

For a stream of queries $\{f_i\}$ with sensitivity $\Delta$, threshold  $T$ and neighboring data sets $x,x'$, if we run Algorithm~\ref{algo:lapAT} with $\expo$ noise, then for any given outcome $\{\bot^{k-1},\top\}$ we have

\[
\prob{}{\AT(x, \{f_i\}, T) = \{\bot^{k-1},\top\}} \leq \exp({\diffp}(x,x')) \prob{}{\AT(x', \{f_i\}, T) = \{\bot^{k-1},\top\}}
\]

\end{lemma}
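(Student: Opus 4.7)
The plan is to follow the standard integral-based analysis of $\AT$, but tracking the shifts more carefully so that the two pieces of $\diffp(x,x')$ appear naturally: one piece from translating the noisy threshold, and one piece from the ratio of the ``$\top$'' probability for the $k$-th query. First I would write the probability of the outcome $\{\bot^{k-1},\top\}$ at input $x$ as
\[
P(x) \;=\; \int_T^{\infty} p_{\expo}(\hat t - T;\,\Delta/\diffp_1)\,\Bigl(\prod_{i=1}^{k-1} A_i(x,\hat t)\Bigr)\, B_k(x,\hat t)\, d\hat t,
\]
where $A_i(x,\hat t)=\Pr_{\nu_i}[f_i(x)+\nu_i<\hat t]$ and $B_k(x,\hat t)=\Pr_{\nu_k}[f_k(x)+\nu_k\geq \hat t]$, and similarly for $P(x')$. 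The lower limit is $T$ rather than $-\infty$ because $\hat T\geq T$ for $\expo$ noise.

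Next I would set $s = \Delta_k(x,x')=\max_{i<k}\max\{0,f_i(x')-f_i(x)\}$ and argue two pointwise inequalities that hold for every $\hat t$. For each $i<k$, since $s\geq f_i(x')-f_i(x)$ the event $\{f_i(x)+\nu_i<\hat t\}$ is contained in $\{f_i(x')+\nu_i<\hat t+s\}$ once one compares thresholds, giving $A_i(x,\hat t)\leq A_i(x',\hat t+s)$ with no multiplicative cost. For the $k$-th query I would compute the ratio $B_k(x,\hat t)/B_k(x',\hat t+s)$ directly using $\Pr_{\nu_k}[\nu_k\geq a]=\exp(-\max(0,a)\diffp_2/\Delta)$, and check by cases (on the signs of $\hat t - f_k(x)$ and $\hat t+s-f_k(x')$) that
\[
B_k(x,\hat t) \;\leq\; \exp\!\Bigl(\tfrac{\diffp_2}{\Delta}\max\{0,\,s-(f_k(x')-f_k(x))\}\Bigr)\, B_k(x',\hat t+s).
\]

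With these in hand, I would substitute $\hat t' = \hat t + s$ in the resulting upper bound for $P(x)$. The density shift yields $p_{\expo}(\hat t'-s-T;\,\Delta/\diffp_1) = \exp(s\diffp_1/\Delta)\,p_{\expo}(\hat t'-T;\,\Delta/\diffp_1)$ on the shifted range $\hat t'\in[T+s,\infty)$, and since $s\geq 0$ that range is a subset of $[T,\infty)$, so extending back to $[T,\infty)$ only increases the integral. Combining gives
\[
P(x) \;\leq\; \exp\!\Bigl(\tfrac{\diffp_1}{\Delta}s + \tfrac{\diffp_2}{\Delta}\max\{0,s-(f_k(x')-f_k(x))\}\Bigr)\,P(x'),
\]
and taking the max over $k$ in the exponent yields the desired $e^{\diffp(x,x')}$ factor.

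The main obstacle I expect is the case analysis for the $B_k$ ratio: the $\expo$ density is supported on the nonnegative reals, so the ratio $B_k(x,\hat t)/B_k(x',\hat t+s)$ behaves differently depending on whether $\hat t - f_k(x)$ and $\hat t + s - f_k(x')$ are both positive, of opposite signs, or both nonpositive, and one has to verify that in every case the ratio is bounded by $\exp(\max\{0,s-(f_k(x')-f_k(x))\}\diffp_2/\Delta)$. A secondary subtlety is ensuring that the translation of the threshold density by $s\geq 0$ is cleanly absorbed into an $e^{s\diffp_1/\Delta}$ factor on the support of $\expo$; the one-sidedness of the noise actually helps here because the shifted support is contained in the original one.
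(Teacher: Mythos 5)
Your proposal is correct, and it reaches the right exponent, but it gets there by a mechanically different route than the paper. The paper's proof conditions on the first $k-1$ query noise draws being identical in both runs, sets $\tau=\max_{i<k}f_i(x)+\nu_i$ and $\tau'=\max_{i<k}f_i(x')+\nu_i$, and then builds an injective map $(\nu,\nu_k)\mapsto(\nu+\Delta_1,\nu_k+\Delta_2)$ on the remaining two noise variables, where $\Delta_1=\max\{0,\tau'-\tau\}$ is a \emph{randomness-dependent} shift that is only afterwards bounded by $\Delta_k(x,x')$; the event preservation and the density-ratio bound are checked pointwise on noise pairs. You instead integrate out the first $k-1$ query noises into CDFs $A_i$, integrate over the noisy threshold $\hat t$, apply the \emph{deterministic} shift $s=\Delta_k(x,x')$, dominate each $A_i(x,\hat t)$ by $A_i(x',\hat t+s)$ at no multiplicative cost, and compute the $B_k$ ratio in closed form from the exponential survival function (your three-way case analysis on the signs of $\hat t-f_k(x)$ and $\hat t+s-f_k(x')$ is exactly the inequality $\max(0,b)-\max(0,a)\le\max(0,b-a)$, which does hold in all cases). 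Both arguments are sound. What the paper's version buys is that it ports almost verbatim to $\lap$ noise (only density ratios are used, never the explicit survival function) and it keeps the shift as tight as the realized noise allows before relaxing to $\Delta_k(x,x')$; what yours buys is that the domination of the $\bot$ events is a clean stochastic-domination statement with no bookkeeping of which index achieves the max, and the one-sidedness of $\expo$ is used exactly where it matters (the containment $[T+s,\infty)\subseteq[T,\infty)$ after the change of variables, which you correctly flag). One small presentational point: for a fixed outcome $\{\bot^{k-1},\top\}$ you do not need to ``take the max over $k$'' — the exponent you obtain is the $k$-th term of the max defining $\diffp(x,x')$ and is therefore already bounded by it.
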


\begin{proof}

Let $\nu_i \sim \expo(\Delta/\diffp_2)$ denote the noise drawn for query $f_i$, and let $\nu \sim \expo(\Delta/\diffp_1)$ denote the noise drawn for the threshold. 
Further let $\nu_{i \leq k}$ denote all $\nu_i$ such that $i \leq k$. 
By construction, we have 
\[
\prob{}{\AT(x, \{f_i\}, T) = \{\bot^{k-1},\top\}} = \prob{\nu, \nu_{i\leq k}}{\max_{i<k} f_i(x) + \nu_i < T + \nu < f_k(x) + \nu_k}
\]
We will fix the randomness of $\nu_{i<k}$, denoting $\nu_i$ such that $i < k$, for datasets $x$ and $x'$, and let $\tau = \max_{i<k} f_i(x) + \nu_i$ and  $\tau' = \max_{i<k} f_i(x') + \nu_i$.
It suffices then to show that for any fixed randomness $\nu_{i<k}$ we have 

\[
    \prob{\nu, \nu_k}{\tau < T + \nu < f_k(x) + \nu_k}  \leq 
    \exp({\diffp}(x,x')) \prob{\nu , \nu_k }{\tau' < T + \nu < f_k(x') + \nu_k}
\]

Let $\Delta_1 = \max\{0, \tau'-\tau\}$ and $\Delta_2 = \max\{0, \Delta_1 - (f_k(x') - f_k(x))\}$ and let $\nu' = \nu + \Delta_1$ and $\nu_k' = \nu_k + \Delta_2$. 
It suffices to show that for every pair of draws $\nu, \nu_k$ that satisfies $\tau < T + \nu < f_k(x) + \nu_k$, then the corresponding $\nu', \nu_k'$ are i) non-negative, ii) satisfy $\tau' < T + \nu' < f_k(x') + \nu_k'$ and iii)

\[
p_\expo(\nu;\Delta/\diffp_1) \cdot p_\expo(\nu_k;\Delta/\diffp_2) \leq \exp({\diffp}(x,x')) p_\expo(\nu';\Delta/\diffp_1) \cdot p_\expo(\nu'_k;\Delta/\diffp_2) 
\]

We note that condition iii) does not require any scalar on the RHS as our change of variable is a shift for each so the Jacobian of the mapping is the identity matrix with determinant of 1.

Condition i) follows from the fact that $\nu,\nu_k$ are non-negative because they're drawn from the exponential distribution, and $\Delta_1, \Delta_2$ are non-negative by construction.

For condition ii), if $\tau < T + \nu$ we must have $\tau' < T + \nu + \Delta_1$ by definition of $\Delta_1$. Similarly, if $T + \nu < f_k(x) + \nu_k$ then $T + \nu + \Delta_1 < f_k(x') + \nu_k + \Delta_2$ because $\Delta_2 \geq \Delta_1 - (f_k(x') - f_k(x))$.

For condition iii), the PDF of the exponential distribution implies $p_\expo(\nu;\Delta/\diffp_1) \leq \exp(\frac{\diffp_1 \Delta_1}{\Delta}) p_\expo(\nu';\Delta/\diffp_1)$ and $p_\expo(\nu_k;\Delta/\diffp_2) \leq \exp(\frac{\diffp_2 \Delta_2}{\Delta}) p_\expo(\nu'_k;\Delta/\diffp_2)$. 
Due to the fixing of randomness of $\nu_{i<k}$, we have that $\tau' - \tau \leq \max_{i < k} (f_i(x') - f_i(x))$ so $\Delta_1 \leq \Delta_k(x,x')$ from Definition~\ref{def:one_sided_priv} and $\Delta_2 \leq \max \{0, \Delta_k(x,x') -(f_k(x') - f_k(x))\}$ which implies condition iii).
\end{proof}

This proof then easily applies to $\lap$ as well. The proof for $\gum$ will differ though as we no longer have similar closeness of PDF properties, but can instead utilize our closed form expressions.

\begin{lemma}\label{lem:generalized_at_gum}

For a stream of queries $\{f_i\}$ with sensitivity $\Delta$, threshold  $T$ and neighboring data sets $x,x'$, if we run Algorithm~\ref{algo:lapAT} with $\gum$ noise with $\diffp_1 = \diffp_2$, then for any outcome $\{\bot^{k-1},\top\}$ we have

\[
\prob{}{\AT(x, \{f_i\}, T) = \{\bot^{k-1},\top\}} \leq \exp({\diffp}(x,x')) \prob{}{\AT(x', \{f_i\}, T) = \{\bot^{k-1},\top\}}
\]

\end{lemma}

\begin{proof}
Let $\diffp = \diffp_1 = \diffp_2$. From Lemma~\ref{lem:gumToPeelingExp} we know that
        \begin{multline*}
           \prob{}{\AT(x,\{f_i\},T) = \{\bot^{k-1},\top\}} = 
           \\
           \frac{\exp({\frac{\diffp}{\Delta} f_k(x)})}{\exp({\frac{\diffp}{\Delta} T) + \sum_{i=1}^k \exp(\frac{\diffp}{\Delta} f_i(x)})} \cdot 
    \frac{\exp({\frac{\diffp}{\Delta} T})}{\exp({\frac{\diffp}{\Delta} T) + \sum_{i=1}^{k-1} \exp(\frac{\diffp}{\Delta} f_i(x)})} 
    \end{multline*}

Similar to the proof of Lemma~\ref{lem:exp_noise_helper}, we let $\Delta_1 = \max\{0,\max_{i < k} (f_i(x') - f_i(x))\}$ and $\Delta_2 = \max \{ 0,\Delta_1 + (f_k(x) - f_k(x'))\}$.
We first want to show that 

\begin{equation}\label{eq:1}
    \frac{\exp({\frac{\diffp}{\Delta} T})}{\exp({\frac{\diffp}{\Delta} T) + \sum_{i=1}^{k-1} \exp(\frac{\diffp}{\Delta} f_i(x)})} \leq e^{\frac{\diffp\Delta_1}{\Delta}}
        \frac{\exp({\frac{\diffp}{\Delta} T})}{\exp({\frac{\diffp}{\Delta} T) + \sum_{i=1}^{k-1} \exp(\frac{\diffp}{\Delta} f_i(x')})} 
\end{equation}

This inequality reduces to showing $\exp(\frac{\epsilon}{\Delta} T) + \sum_{i=1}^{k-1} \exp(\frac{\epsilon}{\Delta} f_i(x')) \leq e^{\frac{\epsilon\Delta_1}{\Delta}} (\exp(\frac{\epsilon}{\Delta} T)  + \sum_{i=1}^{k-1} \exp(\frac{\epsilon}{\Delta} f_i(x)))$. This holds from the fact that $\exp(\frac{\epsilon}{\Delta} T) \leq \exp(\epsilon\Delta_1/\Delta) \exp(\frac{\epsilon}{\Delta} T)  $ because $ \epsilon\Delta_1/\Delta \geq 0$ and $\exp(\frac{\epsilon}{\Delta} f_i(x')) \leq \exp({\epsilon\Delta_1/\Delta}) \exp(\frac{\epsilon}{\Delta} f_i(x)) $  for all $i < k$ by our definition of $\Delta_1$.


\begin{equation}\label{eq:2}
    \frac{\exp({\frac{\diffp}{\Delta} f_k(x)})}{\exp({\frac{\diffp}{\Delta} T) + \sum_{i=1}^k \exp(\frac{\diffp}{\Delta} f_i(x)})} \leq e^{\frac{\diffp\Delta_2}{\Delta}}
        \frac{\exp({\frac{\diffp}{\Delta} f_k(x')})}{\exp({\frac{\diffp}{\Delta} T) + \sum_{i=1}^k \exp(\frac{\diffp}{\Delta} f_i(x')})} 
\end{equation}

First we let $\Delta_1' = \max\{0, \max_{\ell \leq k} (f_\ell(x') - f_\ell(x))\}$ and $\Delta_2' = f_k(x) - f_k(x')$, and we will achieve our desired inequality (2) by bounding the numerator with respect to $\Delta_2'$ and the denominator with respect to $\Delta_1'$. We have $\exp(\frac{\epsilon}{\Delta} f_k(x)) = \exp({\epsilon\Delta_2'/\Delta}) \exp(\frac{\epsilon}{\Delta} f_k(x'))$ by definition of $\Delta_2'$. Further $\exp(\frac{\epsilon}{\Delta} f_i(x')) \leq \exp({\epsilon\Delta_1'/\Delta}) \exp(\frac{\epsilon}{\Delta} f_i(x))$ by definition of $\Delta_1'$, so 
\[
\exp(\frac{\epsilon}{\Delta} T)  + \sum_{i=1}^{k} \exp(\frac{\epsilon}{\Delta} f_i(x')) \leq e^{\frac{\epsilon\Delta_1'}{\Delta}}(\exp(\frac{\epsilon}{\Delta} T)  + \sum_{i=1}^{k} \exp(\frac{\epsilon}{\Delta} f_i(x)))
\]
because ${\epsilon\Delta_1'/\Delta} \geq 0$. In order to show our desired inequality (2) it then suffices to show $\Delta_2 \geq \Delta_1' + \Delta_2'$. By definition we know $\Delta_2 \geq \Delta_1 + \Delta_2'$. Furthermore, if $\Delta_1' > \Delta_1$ then $\Delta_1' = f_k(x') - f_k(x) = -\Delta_2'$ which implies $\Delta_1' + \Delta_2' = 0$, and we know $\Delta_2 \geq 0 $ by construction.

Combining inequality~(\ref{eq:1}) and~(\ref{eq:2}) along with the fact that $\Delta_1 = \Delta_k(x,x')$ from Definition~\ref{def:one_sided_priv} completes the proof.



\end{proof}

Now that we've bounded the one-sided privacy loss for neighboring datasets for each type of noise, this will immediately imply an overall privacy bound over all neighbors that utilizes this general characterization.

\begin{corollary}\label{cor:generalized_at_is_dp}

For a stream of queries $\{f_i\}$ with sensitivity $\Delta$ and threshold  $T$, Algorithm~\ref{algo:lapAT} with $\lap$, $\gum$, or $\expo$ noise is $\diffp$-DP where 

\[
\diffp = \max_{x\sim x'} {\diffp}(x,x')
\]

and for $\gum$ noise we must have $\diffp_1 = \diffp_2$.

\end{corollary}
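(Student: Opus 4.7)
The plan is to lift the pointwise one-sided privacy bounds already established in Lemma~\ref{lem:exp_noise_helper} (for $\expo$, and by the remark in the text also for $\lap$) and Lemma~\ref{lem:generalized_at_gum} (for $\gum$ with $\diffp_1=\diffp_2$) into the standard differential privacy statement over arbitrary output events. First I would fix a pair of neighboring datasets $x \sim x'$ and note that every realization of $\AT$ is either a finite sequence of the form $\{\bot^{k-1},\top\}$ or the (possibly infinite) all-$\bot$ sequence if the stream never triggers. These outcomes are mutually exclusive and form a countable partition of the output space, so any measurable event $S \subseteq \cY$ can be written as a countable disjoint union of such basic outcomes.

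Next I would invoke the relevant pointwise lemma: for every finite outcome $\{\bot^{k-1},\top\}$,
\[
\Pr[\AT(x,\{f_i\},T) = \{\bot^{k-1},\top\}] \;\le\; \exp(\diffp(x,x')) \cdot \Pr[\AT(x',\{f_i\},T) = \{\bot^{k-1},\top\}].
\]
The all-$\bot$ outcome can be handled by taking the limit of the bound as $k \to \infty$ on the probabilities of the prefix events $\{\bot^{k-1}\}$, which again satisfy the same one-sided inequality by the same coupling arguments used in the two lemmas (the threshold noise and first $k-1$ query noises are coupled identically, and no constraint on the $k$th query is needed). Summing the pointwise inequality over the countable partition of $S$ yields
\[
\Pr[\AT(x,\{f_i\},T) \in S] \;\le\; \exp(\diffp(x,x')) \cdot \Pr[\AT(x',\{f_i\},T) \in S].
\]

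Finally, taking the maximum over all neighboring pairs $x \sim x'$ replaces $\diffp(x,x')$ with $\diffp = \max_{x \sim x'} \diffp(x,x')$, giving the desired $\diffp$-DP guarantee uniformly. The case distinction between $\lap$/$\expo$ and $\gum$ is purely which pointwise lemma is being invoked; the requirement $\diffp_1 = \diffp_2$ in the $\gum$ case is inherited from Lemma~\ref{lem:generalized_at_gum}.

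I do not expect a serious obstacle here: the pointwise bounds already encode all the hard work (the coupling of exponential/Laplace noise in one case, and the closed-form expression via the peeling exponential mechanism in the other). The only mildly delicate point is the all-$\bot$ event for infinite streams, which is why I would handle it through a limiting argument on finite prefixes rather than trying to assign a density directly.
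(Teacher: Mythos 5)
Your proposal is correct and matches the paper's (implicit) argument: the paper treats the corollary as an immediate consequence of Lemma~\ref{lem:exp_noise_helper} and Lemma~\ref{lem:generalized_at_gum}, obtained by summing the pointwise bounds over outcomes and maximizing over neighboring pairs. Your additional care with the countable partition and the all-$\bot$ limiting case simply fills in details the paper leaves unstated.
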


\subsection{Range-bounded definition and properties}

The definition of \textit{range-bounded} was originally introduced in \cite{durfee2019practical} to tighten the privacy bounds on the composition of exponential mechanisms. The analysis was further extended in \cite{dong2020optimal} to give the optimal composition of range-bounded mechanisms. This definition was then unified with other definitions in \cite{cesar2021bounding}.

\begin{definition}[\cite{durfee2019practical}]\label{def:range_bounded}
A mechanism $M : \cX \to \cY$ is $\diffp$-range-bounded if for any neighboring datasets $x,x'\in \cX$ and outcomes $y,y' \in \cY$:

\[
\frac{\prob{}{M(x) = y}}{\prob{}{M(x') = y}} \leq e^{\diffp} \frac{\prob{}{M(x) = y'}}{\prob{}{M(x') = y'}}
\]
    
\end{definition}

If we have bounds upon this property that are stronger than those immediately implied by DP, then it was shown in \cite{dong2020optimal} that substantial improvements can be made in bounding the overall DP over the composition of these mechanisms.
While these improvements can be applied to our results here as well, we focus upon the privacy bounds with respect to zCDP as it is much cleaner to work with in providing strong composition bounds.
From the proof of Lemma 3.4 in \cite{cesar2021bounding} we see that the range-bounded property can give a significant improvement in zCDP properties compared to similar DP guarantees.  

\begin{proposition}[\cite{cesar2021bounding}]\label{prop:br_to_zcdp}
If a mechanism $M$ is $\diffp$-range-bounded then it is $\frac{1}{8}\diffp^2$-zCDP.

\end{proposition}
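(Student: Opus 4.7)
The plan is to reduce the statement to a Hoeffding-type MGF bound applied to the privacy-loss random variable. Fix neighboring datasets $x,x'$ and set $L(y) = \ln \frac{\Pr[M(x)=y]}{\Pr[M(x')=y]}$. Definition~\ref{def:range_bounded} is equivalent to the assertion that $\sup_y L(y) - \inf_y L(y) \leq \diffp$, so $L$ is a random variable whose support has width at most $\diffp$ regardless of which neighbor is used to sample $y$.

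First I would bound the KL mean $\mu := \expec{y \sim M(x)}{L(y)}$ by $\diffp^2/8$. The change-of-measure identity $\expec{y \sim M(x)}{e^{-L(y)}} = 1$, combined with Hoeffding's lemma applied to the centered variable $-(L(y)-\mu)$ (which takes values in an interval of width at most $\diffp$ under $M(x)$), gives
\[
1 \;=\; e^{-\mu}\,\expec{y \sim M(x)}{e^{-(L(y)-\mu)}} \;\leq\; e^{-\mu}\exp\!\left(\tfrac{\diffp^2}{8}\right),
\]
so $\mu \leq \diffp^2/8$. Second, I would expand the Rényi divergence from Section~\ref{sec:prelim} as $D_\alpha(M(x)\|M(x')) = \tfrac{1}{\alpha-1}\ln \expec{y \sim M(x)}{e^{(\alpha-1)L(y)}}$ and apply Hoeffding's lemma once more, this time to $(\alpha-1)(L(y)-\mu)$, to obtain
\[
\expec{y \sim M(x)}{e^{(\alpha-1)L(y)}} \;\leq\; \exp\!\left((\alpha-1)\mu + \tfrac{(\alpha-1)^2 \diffp^2}{8}\right).
\]
Taking logarithms and dividing by $\alpha-1$, then combining with the first step, yields $D_\alpha(M(x)\|M(x')) \leq \mu + (\alpha-1)\diffp^2/8 \leq \alpha\,\diffp^2/8$, which is exactly the $\tfrac{1}{8}\diffp^2$-zCDP bound required by Definition~\ref{def:zcdp}.

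The main obstacle is the first step: without controlling $\mu$ we would only conclude $D_\alpha \leq \mu + (\alpha-1)\diffp^2/8$, which is essentially vacuous because a priori $\mu$ could be as large as $\diffp$ (the bound one gets from plain DP). The key qualitative insight is that range-boundedness is strictly stronger than $L$ being pointwise bounded by $\diffp$: together with the normalization $\expec{y \sim M(x)}{e^{-L(y)}} = 1$, it forces the mean of $L$ to be quadratic, rather than linear, in $\diffp$. This is precisely why range-bounded mechanisms admit the tighter $\tfrac{1}{8}\diffp^2$-zCDP guarantee instead of the $\tfrac{1}{2}\diffp^2$ bound that would follow from $\diffp$-DP via Proposition~\ref{lem:dpTozCDP}.
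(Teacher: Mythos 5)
Your proof is correct. The paper gives no proof of this proposition itself---it imports it from the proof of Lemma 3.4 of \cite{cesar2021bounding}---and your argument is essentially the one given there: range-boundedness bounds the width of the privacy-loss variable $L$ by $\diffp$, the normalization $\expct{y\sim M(x)}{e^{-L(y)}}=1$ combined with Hoeffding's lemma forces the KL mean $\mu$ down to $\diffp^2/8$ rather than the $\diffp$ one gets from plain DP, and a second application of Hoeffding's lemma to the R\'enyi moment generating function yields $D_\alpha \leq \mu + (\alpha-1)\diffp^2/8 \leq \alpha\diffp^2/8$. The only point worth making explicit is that $L$ is almost surely finite (i.e., $M(x)$ is absolutely continuous with respect to $M(x')$), which Hoeffding's lemma needs and which is already forced by $\sup_y L(y) - \inf_y L(y)\leq\diffp$ together with $\expct{y\sim M(x)}{e^{-L(y)}}=1$.
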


Our general characterization is already set up to provide range-bounded properties of $\AT$. This will allow us to give tighter guarantees on the range-boundedness which can be translated to better privacy composition bounds.

\begin{corollary}\label{cor:generalized_at_is_br}

For a stream of queries $\{f_i\}$ and threshold  $T$, Algorithm~\ref{algo:lapAT} with $\lap$, $\gum$, or $\expo$ noise is $\diffp$-range-bounded where 

\[
\diffp = \max_{x\sim x'} \left( {\diffp}(x,x') + {\diffp}(x',x) \right)
\]

and for $\gum$ noise we must have $\diffp_1 = \diffp_2$.

\end{corollary}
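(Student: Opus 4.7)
The plan is to observe that Corollary~\ref{cor:generalized_at_is_br} is essentially a direct packaging of the one-sided bounds already established in Lemma~\ref{lem:exp_noise_helper} (for $\lap$ and $\expo$) and Lemma~\ref{lem:generalized_at_gum} (for $\gum$), applied in both directions and combined. Concretely, the range-bounded definition rearranges as
\[
\prob{}{M(x) = y} \cdot \prob{}{M(x') = y'} \leq e^{\diffp} \prob{}{M(x') = y} \cdot \prob{}{M(x) = y'},
\]
so it suffices to produce a two-sided bound on the \emph{product} of cross-probabilities, not on any single ratio.

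The main step is to fix neighbors $x \sim x'$ and outcomes $y = \{\bot^{k-1},\top\}$, $y' = \{\bot^{k'-1},\top\}$, and invoke the appropriate one-sided lemma twice. First I would apply Lemma~\ref{lem:exp_noise_helper} (or Lemma~\ref{lem:generalized_at_gum} in the $\gum$ case) to outcome $y$ with datasets ordered as $(x,x')$ to obtain
\[
\prob{}{\AT(x, \{f_i\}, T) = y} \leq \exp(\diffp(x,x')) \prob{}{\AT(x', \{f_i\}, T) = y}.
\]
Then I would apply the same lemma to outcome $y'$ with the roles of the datasets swapped, giving
\[
\prob{}{\AT(x', \{f_i\}, T) = y'} \leq \exp(\diffp(x',x)) \prob{}{\AT(x, \{f_i\}, T) = y'}.
\]
Multiplying these two inequalities yields the rearranged range-bounded condition with privacy parameter $\diffp(x,x') + \diffp(x',x)$. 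Taking the maximum over all neighboring pairs $x \sim x'$ gives the stated $\diffp$, and the proof is uniform across the three noise distributions since both one-sided lemmas cover all of $\lap$, $\expo$, and $\gum$ (with the $\diffp_1 = \diffp_2$ restriction retained for $\gum$).

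The only thing to be slightly careful about is the space of outcomes $y$: the previous lemmas are stated for outcomes of the form $\{\bot^{k-1},\top\}$, but in principle the stream may terminate with an all-$\bot$ outcome $\{\bot^{k}\}$ if no query ever crosses the noisy threshold. This is the main (though mild) obstacle. I would handle it by noting that the injective-coupling argument underlying Lemma~\ref{lem:exp_noise_helper} transfers verbatim to this case — one simply drops the final ``$T + \nu < f_k(x) + \nu_k$'' constraint and keeps only the ``$\tau < T + \nu$'' condition, and the resulting shift of $\nu$ by $\Delta_1$ still maps the $x$-event into the corresponding $x'$-event with the same density ratio bound. For $\gum$ noise, the same closed-form argument in Lemma~\ref{lem:generalized_at_gum} specializes by removing the factor corresponding to the $k$-th query being on top. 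With the one-sided bound now holding for all outcomes, the product-of-inequalities step above completes the proof.
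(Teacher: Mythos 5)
Your proposal is correct and matches the paper's proof exactly: the paper likewise rewrites the range-bounded condition as a product inequality and cites Lemma~\ref{lem:exp_noise_helper} and Lemma~\ref{lem:generalized_at_gum} applied in both directions. Your additional care about all-$\bot$ outcomes is a reasonable detail that the paper's terse proof glosses over, and your fix for it is sound.
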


\begin{proof}
    We can rewrite the inequality from Definition~\ref{def:range_bounded} as equivalently

\[
{\prob{}{M(x) = y}}\cdot{\prob{}{M(x') = y'}} \leq e^{\diffp} {\prob{}{M(x') = y}} \cdot{\prob{}{M(x) = y'}}
\]

Our claim then follows immediately by applying  Lemma~\ref{lem:exp_noise_helper} and Lemma~\ref{lem:generalized_at_gum}
\end{proof}

\subsection{Improved composition analysis of SVT}\label{subsec:at_improvement}

In this section we complete our analysis for improving the privacy bounds for sparse vector technique.
While our generalized characterization was in a more complex form, we simplify it here for general queries and monotonic queries.
This will then match up to the privacy guarantees that we gave in Section~\ref{sec:at} for $\AT$.

\begin{lemma}\label{lem:generalized_simplified_for_normal_queries}
    Given a stream of queries $\{f_i\}$ with sensitivity $\Delta$ then for any neighboring datasets ${\diffp}(x,x') \leq \diffp_1 + 2\diffp_2$. If the queries are monotonic then for any neighboring datasets ${\diffp}(x,x') \leq \diffp_1 + \diffp_2$, and furthermore, ${\diffp}(x,x') + {\diffp}(x',x) \leq \diffp_1 + 2\diffp_2$.
\end{lemma}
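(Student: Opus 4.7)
The plan is to prove each of the three bounds by direct manipulation of the definition of $\diffp(x,x')$, using only sensitivity $\Delta$ as the basic constraint and exploiting the monotonicity assumption to force one side of every pointwise difference to vanish. The argument is essentially bookkeeping; no new tool is needed.

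For the general-query bound, I would start from the observation that, by sensitivity, $f_i(x') - f_i(x) \in [-\Delta, \Delta]$ for every $i$, so $\Delta_k(x,x') = \max_{i<k}\max\{0, f_i(x') - f_i(x)\} \in [0,\Delta]$. This immediately yields $\frac{\diffp_1}{\Delta}\Delta_k(x,x') \leq \diffp_1$. For the second summand, the worst case is when $\Delta_k(x,x')$ is as large as possible while $f_k(x') - f_k(x)$ is as negative as possible, so $\Delta_k(x,x') - (f_k(x') - f_k(x)) \leq \Delta - (-\Delta) = 2\Delta$, giving at most $2\diffp_2$. Taking the max over $k$ preserves the bound and delivers $\diffp(x,x') \leq \diffp_1 + 2\diffp_2$.

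For the monotonic case, I would split on the direction of monotonicity. If $f_i(x') \geq f_i(x)$ for all $i$, then $f_k(x') - f_k(x) \geq 0$, hence $\Delta_k(x,x') - (f_k(x') - f_k(x)) \leq \Delta_k(x,x') \leq \Delta$, so the second summand is at most $\diffp_2$ and the total is at most $\diffp_1 + \diffp_2$. In the opposite direction $f_i(x') \leq f_i(x)$ for all $i$, so $\max\{0, f_i(x') - f_i(x)\} = 0$ for every $i<k$, giving $\Delta_k(x,x') = 0$ and $\diffp(x,x') = 0 \leq \diffp_1 + \diffp_2$.

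For the \emph{furthermore} assertion, the same case split does the work: in the first direction $\Delta_k(x',x) = \max_{i<k}\max\{0, f_i(x) - f_i(x')\} = 0$ for every $k$, so $\diffp(x',x) = 0$, and by symmetry the reverse direction kills $\diffp(x,x')$. Thus one of the two terms is always zero and the sum equals the nonzero one, which is bounded by $\diffp_1 + \diffp_2 \leq \diffp_1 + 2\diffp_2$ from the previous step. The only real obstacle is keeping the case analysis tidy and noticing that monotonicity forces one of $\Delta_k(x,x')$ and $\Delta_k(x',x)$ to vanish identically over $k$; everything else is an immediate consequence of the sensitivity bound applied termwise.
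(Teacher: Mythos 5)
Your bound for general queries is correct and matches the paper's argument. The problem is in the monotonic case: you claim that when $f_i(x') \leq f_i(x)$ for all $i$ we get $\diffp(x,x') = 0$ because $\Delta_k(x,x') = 0$. That forgets the second summand. Even with $\Delta_k(x,x') = 0$, the term $\max\{0,\, \Delta_k(x,x') - (f_k(x') - f_k(x))\} = \max\{0,\, f_k(x) - f_k(x')\}$ can be as large as $\Delta$ in exactly this direction, so the correct conclusion is $\diffp(x,x') \leq \diffp_2$, not $0$. The one-sided bound $\diffp(x,x') \leq \diffp_1 + \diffp_2$ survives, but your justification of the \emph{furthermore} clause collapses: it rests entirely on ``one of the two terms is always zero,'' which is false, and would in fact yield the stronger bound $\diffp(x,x') + \diffp(x',x) \leq \diffp_1 + \diffp_2$ --- which does not hold. (Take $\Delta = 1$, $f_1(x) - f_1(x') = 1$, $f_2(x) - f_2(x') = 0$: then $\diffp(x,x') = \diffp_2$ from $k=1$ and $\diffp(x',x) = \diffp_1 + \diffp_2$ from $k=2$, so the sum attains $\diffp_1 + 2\diffp_2$.)

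The repair is the paper's accounting: with $f_k(x) \geq f_k(x')$ for all $k$, the direction $\diffp(x,x')$ has a vanishing first summand but a second summand up to $\diffp_2$, giving $\diffp(x,x') \leq \diffp_2$; the reverse direction gives $\diffp(x',x) \leq \diffp_1 + \diffp_2$ as you argued. Adding these yields $\diffp_1 + 2\diffp_2$ for the sum, which is what the lemma actually claims.
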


\begin{proof}
    By Definition~\ref{def:sensitivity} we must always have ${\Delta}_k(x,x') \leq \Delta$ for any neighboring datasets. Furthermore, we must also have $f_k(x') - f_k(x) \geq -\Delta$ for any neighboring datasets. This implies ${\diffp}(x,x') \leq \diffp_1 + 2\diffp_2$ for any neighboring datasets.

    Now assume the queries are monotonic. Without loss of generality assume $f_k(x) \geq f_k(x')$ for all $k$. Therefore ${\Delta}_k(x,x') = 0$ and $f_k(x') - f_k(x) \geq -\Delta$, which implies ${\diffp}(x,x') \leq \diffp_2$. Similarly we have ${\Delta}_k(x',x) \leq \Delta$ but also $f_k(x) - f_k(x') \geq 0$, which implies ${\diffp}(x',x) \leq \diffp_1 + \diffp_2$. Combining these implies ${\diffp}(x',x) \leq \diffp_1 + \diffp_2$ and ${\diffp}(x',x) + {\diffp}(x,x')\leq \diffp_1 + 2\diffp_2$ for any neighboring datasets.
\end{proof}

This lemma along with Corollary~\ref{cor:generalized_at_is_dp} immediately imply Corollary~\ref{cor:exp_at_dp} and Lemma~\ref{lem:gumAT-DP}

\begin{proof}[Proof of Theorem~\ref{thm:gumAT-zCDP}]
From Lemma~\ref{lem:generalized_simplified_for_normal_queries} and Corollary~\ref{cor:generalized_at_is_br} we have that Algorithm~\ref{algo:lapAT} is $\diffp_1 + 2\diffp_2$-range-bounded. Applying Proposition~\ref{prop:br_to_zcdp} then gives the zCDP guarantees.
\end{proof}

The sparse vector technique is simply an iterative call to $\AT$ and as such the privacy bounds will come from the composition. 
By analyzing the composition through zCDP which has become commonplace in the privacy community, we can immediately improve the privacy guarantees for monotonic queries in the sparse vector technique.
Furthermore we can also improve the privacy guarantees under the standard definition using the improved composition bounds for range-bounded mechanisms from \cite{dong2020optimal}.


We also note here that the generalized Sparse Vector Technique given in \cite{lyu2017understanding} only adds noise once to the threshold and only has to scale the noise to the number of calls to $\AT$ for the queries and not the threshold.
Our analysis can extend to give the same properties for $\lap$ and $\expo$ noise, but utilizing advanced composition properties will give far more accuracy.
More specifically, the threshold and queries having noise proportional to $\sqrt{c}/\diffp$, is preferable to all the queries having noise proportional to ${c}/\diffp$, and only the threshold proportional to $1/\diffp$, where $c$ is the number of calls to $\AT$. 


\subsection{Iterative exponential mechanism}\label{subsec:iterative_exp}

It is folklore that the peeling exponential mechanism is equivalent to taking the top-$k$ after adding $\gum$ noise, as formally shown in Lemma 4.2 of \cite{durfee2019practical}. 
We show a similar property here for $\AT$ with $\gum$ noise that it is equivalent to iteratively running the exponential mechanism.

\begin{algorithm}
\caption{Iterative Exponential Mechanism \label{algo:iterativeExpo}}
\begin{algorithmic}[1]
\REQUIRE Input dataset $x$, a stream of queries $\{f_i: \cX \to \R\}$ with sensitivity $\Delta$, and a threshold $T$
\FOR{each query $i$}
\STATE Run exponential mechanism with $\calY = \{0,...,i\}$ where $q(x,0) = T$ and $q(x,j) = f_j(x)$ for all $j > 0$
\IF{$i$ is selected} 
\STATE Output $\top$ and \texttt{halt}
\ELSE{} 
\STATE Output $\bot$
\ENDIF
\ENDFOR
\end{algorithmic}
\end{algorithm}

\begin{lemma}
    If $\diffp_1 = \diffp_2 = \diffp/2$ then Algorithm~\ref{algo:lapAT} with $\gum$ noise gives the equivalent output distribution to Algorithm~\ref{algo:iterativeExpo}.
\end{lemma}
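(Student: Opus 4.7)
The plan is to compute the probability of each outcome $\{\bot^{k-1},\top\}$ under both algorithms and show they coincide. Since both procedures always output a prefix of $\bot$'s followed by $\top$ (they halt upon the first $\top$), matching the probabilities of these events on all $k$ is sufficient to establish equivalence of the output distributions.

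First, for $\AT$ with $\gum$ noise and $\diffp_1 = \diffp_2 = \diffp/2$, I will invoke Lemma~\ref{lem:gumToPeelingExp} directly. Substituting $\diffp/2$ for the noise parameter in that lemma gives
\[
\prob{}{\AT(x,\{f_i\},T) = \{\bot^{k-1},\top\}} = \frac{\exp(\tfrac{\diffp}{2\Delta} f_k(x))}{\exp(\tfrac{\diffp}{2\Delta} T) + \sum_{i=1}^k \exp(\tfrac{\diffp}{2\Delta} f_i(x))} \cdot \frac{\exp(\tfrac{\diffp}{2\Delta} T)}{\exp(\tfrac{\diffp}{2\Delta} T) + \sum_{i=1}^{k-1} \exp(\tfrac{\diffp}{2\Delta} f_i(x))}.
\]

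Next, for Algorithm~\ref{algo:iterativeExpo}, I will unroll the iteration and use the definition of the exponential mechanism at each step. At iteration $i$, the event ``output $\bot$'' corresponds to selecting some index $j \in \{0,1,\ldots,i-1\}$, which by the exponential mechanism has probability equal to $\left(\exp(\tfrac{\diffp}{2\Delta} T) + \sum_{j=1}^{i-1} \exp(\tfrac{\diffp}{2\Delta} f_j(x))\right) \big/ \left(\exp(\tfrac{\diffp}{2\Delta} T) + \sum_{j=1}^{i} \exp(\tfrac{\diffp}{2\Delta} f_j(x))\right)$. The event ``output $\top$ at iteration $k$'' corresponds to selecting index $k$, with probability $\exp(\tfrac{\diffp}{2\Delta} f_k(x)) \big/ \left(\exp(\tfrac{\diffp}{2\Delta} T) + \sum_{j=1}^{k} \exp(\tfrac{\diffp}{2\Delta} f_j(x))\right)$. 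Since the randomness at each iteration is independent, the joint probability is the product over $i=1,\ldots,k-1$ of these $\bot$ probabilities times the final $\top$ probability.

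The key observation is that this product telescopes. The numerator at iteration $i$ exactly cancels the denominator at iteration $i-1$, leaving only the denominator at iteration $k-1$ (from the chain of $\bot$'s) and the full final $\top$ factor. The surviving expression is precisely
\[
\frac{\exp(\tfrac{\diffp}{2\Delta} T)}{\exp(\tfrac{\diffp}{2\Delta} T) + \sum_{j=1}^{k-1} \exp(\tfrac{\diffp}{2\Delta} f_j(x))} \cdot \frac{\exp(\tfrac{\diffp}{2\Delta} f_k(x))}{\exp(\tfrac{\diffp}{2\Delta} T) + \sum_{j=1}^{k} \exp(\tfrac{\diffp}{2\Delta} f_j(x))},
\]
which matches the expression from Lemma~\ref{lem:gumToPeelingExp} above.

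There is no real obstacle here; the argument is essentially bookkeeping. The only mild subtlety is recognizing that ``output $\bot$ at iteration $i$'' in Algorithm~\ref{algo:iterativeExpo} means \emph{any} index $j < i$ was selected (including the threshold index $0$ or any previous query index), not just the threshold. Once this is noted, the numerator at step $i$ aggregates exactly the terms $\exp(\tfrac{\diffp}{2\Delta} T) + \sum_{j=1}^{i-1} \exp(\tfrac{\diffp}{2\Delta} f_j(x))$, enabling the telescoping and completing the identification with Lemma~\ref{lem:gumToPeelingExp}.
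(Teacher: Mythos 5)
Your proof is correct and follows essentially the same route as the paper's: both compute the per-iteration selection probabilities of Algorithm~\ref{algo:iterativeExpo} (the paper writes them as $p_k$ and $1-p_k$), telescope the product of the $\bot$ probabilities, and match the result against the closed form in Lemma~\ref{lem:gumToPeelingExp} with the parameter substitution $\diffp_1=\diffp_2=\diffp/2$. The subtlety you flag --- that ``output $\bot$'' at step $i$ means \emph{any} index in $\{0,\ldots,i-1\}$ was selected --- is exactly what the paper's $1-p_i$ encodes.
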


\begin{proof}
We first define 
\[
p_k = \frac{\exp(\frac{\diffp}{2\Delta} f_k(x))}{\exp(\frac{\diffp}{2\Delta} T) + \sum_{i=1}^k \exp(\frac{\diffp}{2\Delta}f_i(x))}
\]

By construction, the probability of Algorithm~\ref{algo:iterativeExpo} outputting $\{\bot^{k-1},\top\}$ is equal to $p_k \prod_{i=1}^{k-1}(1-p_i)$. Furthermore, by our definition of $p_k$ we have 

\[
1- p_k = \frac{\exp(\frac{\diffp}{2\Delta} T) + \sum_{i=1}^{k-1} \exp(\frac{\diffp}{2\Delta}f_i(x))}{\exp(\frac{\diffp}{2\Delta} T) + \sum_{i=1}^k \exp(\frac{\diffp}{2\Delta}f_i(x))}
\]

Through telescoping cancellation

\[
\prod_{i=1}^{k-1}(1-p_i) =  \frac{\exp({\frac{\diffp}{2\Delta} T})}{\exp({\frac{\diffp}{2\Delta} T) + \sum_{i=1}^{k-1} \exp(\frac{\diffp}{2\Delta} f_i(x)})} 
\]

From Lemma~\ref{lem:gumToPeelingExp} we then see that the output probabilities are equivalent.
    
\end{proof}

\section{Additional unbounded quantile estimation results}\label{sec:fully_unbounded}

In this section we first extend our method to the fully unbounded setting by simply making two calls to $\AT$ and essentially searching through both positive and negative numbers in each respective call.
Next we show how our methods can also extend to the \textit{add-subtract} definition of neighboring. 
Further, we'll apply our approach to the framework set up in \cite{kaplan2022differentially} for computing multiple quantiles and extend it to the \textit{swap} definition.


\subsection{Fully unbounded quantile estimation}\label{subsec:fully_unb}

Recall that our unbounded quantile estimation algorithm assumed that the data was lower bounded. It then slowly increased that bound by a small percentage until the appropriate amount of data fell below the threshold for the given quantile. 
In order to extend to the fully unbounded setting, we will simply first apply this guess and check method to the positive numbers and then apply it to the negative numbers.

\begin{algorithm}
\caption{Fully unbounded quantile mechanism \label{algo:singleQuant}}
\begin{algorithmic}[1]
\REQUIRE Input dataset $x$, a quantile $q$, and parameter $\beta > 1$
\STATE Run $\AT$ with $x$,  $T = qn$ and $f_i(x) = |\{x_j \in x| x_j + 1 < \beta^i\}|$
\STATE Run $\AT$ with $x$,  $T = (1-q)n$ and $f_i(x) = |\{x_j \in x| x_j  - 1 > - \beta^i\}|$
\STATE If the first $\AT$ halts at $k > 0$  then output $\beta^k - 1$
\STATE If the second $\AT$ halts at $k > 0$ then output $-\beta^k + 1$
\STATE Otherwise return 0
\end{algorithmic}
\end{algorithm}

Note that the second call could equivalently be achieved by flipping the sign of all the datapoints and again applying queries $f_i(x) = |\{x_j \in x| x_j + 1 < \beta^i\}|$. Therefore all our privacy guarantees from Section~\ref{sec:algo} will still apply, but composing over two calls to $\AT$, which is the Sparse Vector Technique.

In the first call to $\AT$ we are assuming a lower bound of 0, and searching the positive numbers. If it terminates immediately then it is likely that more than a $q$th fraction of the data is below 0.
We then symmetrically search through the negative numbers by assuming an upper bound of 0. If this halts immediately then it is likely that the quantile is already near 0.
We could also apply other variants of this, such as three total calls to get the maximum and minimum of the data if we wanted full data bounds.

Further note that computing the smallest quantiles will be challenging for our algorithm because it may take many queries to get to the appropriate threshold, but each query will have a reasonable chance of terminating if $q$ is very small.
To account for these queries in the lower-bounded setting, we can invert all the datapoints and instead search for quantile $1-q$ on this transformed data, then invert our resulting estimate.
We would also need to reduce our parameter $\beta$ a reasonable amount in this setting.

\subsection{Extension to \textit{add-subtract} neighbors}\label{sec:add_subtract}

We also extend our results to the \textit{add-subtract} definition of neighboring datasets.

\begin{definition}\label{def:neighbor_add_subtract}
    Datasets $x,x' \in \cX$ are neighboring if one of them can be obtained from the other by adding or removing one individual's data.
\end{definition}

Under this definition we see that our threshold $T = qn$ is no longer fixed, so we will instead set each query to be $f_i(x)  = |\{x_j \in x| x_j - \ell + 1 < \beta^i\}| - qn$.
Unfortunately this query will no longer be monotonic, so we will instead take advantage of our more general characterization in Section~\ref{subsec:generalized} to further tighten the privacy bounds in this setting.

\begin{lemma}\label{lem:bounds_for_add_subtract_neighbors}
Given the stream of queries $f_i(x) = |\{x_j \in x| x_j - \ell + 1 < \beta^i\}| - qn$ with sensitivity $\Delta = 1$, for any neighboring datasets $x,x'$ under Definition~\ref{def:neighbor_add_subtract} and quantile $q \in [0,1]$ we have ${\diffp}(x,x') \leq \max\{(1-q)\diffp_1, q \diffp_1 + \diffp_2\}$.

\end{lemma}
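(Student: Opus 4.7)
The plan is to exploit the monotonicity in $i$ of the count function $c_i(x) := |\{x_j \in x : x_j - \ell + 1 < \beta^i\}|$, which makes the sequence $f_i(x') - f_i(x)$ a step function: it takes one value before $\beta^i$ crosses the shifted coordinate $y - \ell + 1$ of the inserted or deleted point $y$, and another value afterward. This step-function structure lets me compute $\Delta_k(x,x')$ exactly in each case rather than just upper bounding it by $\Delta = 1$, and the resulting case analysis collapses to the stated bound.

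Concretely, I split according to Definition~\ref{def:neighbor_add_subtract} into an \emph{add} case ($n' = n + 1$) and a \emph{remove} case ($n' = n - 1$). In the add case, $f_i(x') - f_i(x) \in \{-q,\, 1-q\}$ depending on whether $y - \ell + 1 \geq \beta^i$ or $y - \ell + 1 < \beta^i$, so $\max\{0,f_i(x')-f_i(x)\} \in \{0,\,1-q\}$ and $\Delta_k(x,x') \in \{0,\,1-q\}$. I then case-split on $k$: if $\Delta_k = 0$, the $\diffp_1$ term vanishes and the $\diffp_2$ term is at most $q\diffp_2$ (attained when $f_k(x')-f_k(x) = -q$); if $\Delta_k = 1-q$, then some $i < k$ has already crossed the transition, and by monotonicity in $i$ this forces $f_k(x')-f_k(x) = 1-q$ too, killing the $\diffp_2$ term and giving $(1-q)\diffp_1$. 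So the add direction contributes at most $\max\{q\diffp_2,\,(1-q)\diffp_1\}$.

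In the remove case, $f_i(x') - f_i(x) \in \{q,\,q-1\}$, so $\Delta_k \in \{0,\,q\}$. Three sub-configurations arise: (i) $\Delta_k = 0$ forces $f_k(x')-f_k(x) = q-1$ and yields $(1-q)\diffp_2$; (ii) $\Delta_k = q$ with $f_k(x')-f_k(x) = q$ yields $q\diffp_1$; (iii) $\Delta_k = q$ with $f_k(x')-f_k(x) = q - 1$ yields $q\diffp_1 + \diffp_2$. So the remove direction contributes at most $\max\{(1-q)\diffp_2,\, q\diffp_1 + \diffp_2\}$.

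Finally I take the overall max across both directions and simplify using $0 \leq q \leq 1$: both $q\diffp_2$ and $(1-q)\diffp_2$ are dominated by $q\diffp_1 + \diffp_2$, leaving exactly $\max\{(1-q)\diffp_1,\, q\diffp_1 + \diffp_2\}$. The main care point will be cleanly handling the four combinations of (add vs.\ remove) $\times$ (before vs.\ after the transition index $i^\ast$), and in particular verifying the asymmetric reduction that distinguishes the two directions: in the add case, monotonicity forces $f_k(x')-f_k(x)$ to match the ``post-transition'' value of $\Delta_k$ (eliminating the $\diffp_2$ contribution), whereas in the remove case the ``post-transition'' value of $f_k$ is strictly smaller than $\Delta_k$, which is exactly what keeps the extra $\diffp_2$ in the bound.
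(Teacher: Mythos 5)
Your proposal is correct and follows essentially the same route as the paper's proof: both arguments use the fact that the increasing thresholds $\beta^i$ make $g_i(x') - g_i(x)$ a monotone step function in $i$, pin down $\Delta_k(x,x')$ and the residual $\Delta_k(x,x') - (f_k(x')-f_k(x))$ in each of the add/remove directions, and then observe that the leftover terms $q\diffp_2$ and $(1-q)\diffp_2$ are absorbed by $q\diffp_1+\diffp_2$. Your sub-case bookkeeping is slightly more granular than the paper's (which fixes WLOG which dataset is larger and treats the two ordered directions together), but the substance is identical.
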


\begin{proof}

Without loss of generality, assume that $x$ has one more individuals data, so $x \in \R^n$ and $x' \in \R^{n-1}$. Let $g_i(x) = |\{x_j \in x| x_j - \ell + 1< \beta^i\}|$. By construction we must have $g_i(x) - g_i(x') \in \{0,1\}$ because $x$ has an additional datapoint. Furthermore, because the thresholds are increasing, if $g_k(x) - g_k(x') = 1$ then $g_i(x) - g_i(x') = 1$ for all $i > k$. Similarly if $g_k(x) - g_k(x') = 0$ then $g_i(x) - g_i(x') = 0$ for all $i < k$.
We further see that $f_i(x) - f_i(x') = g_i(x) - g_i(x') - q$ for all $i$.

First consider the case when $g_k(x) - g_k(x') = 0$.
We therefore have $g_i(x) - g_i(x') = 0$ for all $i < k$ so ${\Delta}_k(x,x') = q$, and also ${\Delta}_k(x,x') - (f_k(x') - f_k(x)) = 0 $, so $\diffp(x,x') \leq q \diffp_1$.
Similarly, we have ${\Delta}_k(x',x) = 0$ and ${\Delta}_k(x,x') - (f_k(x') - f_k(x)) = q $, so $\diffp(x',x) \leq q \diffp_2$.

Next consider the case when $g_k(x) - g_k(x') = 1$.
Therefore we have ${\Delta}_k(x,x') \leq q$ and also $f_k(x') - f_k(x) = q - 1$ which implies ${\Delta}_k(x,x') - (f_k(x') - f_k(x)) \leq 1 $, so $\diffp(x,x') \leq q \diffp_1 + \diffp_2$.
Similarly, we have that ${\Delta}_k(x',x) \leq 1-q$, and thus ${\Delta}_k(x',x) - (f_k(x) - f_k(x')) = 0 $, so $\diffp(x',x) \leq (1 - q) \diffp_1$. 

Combining these bounds gives our desired result.

\end{proof}

With these improved bounds we can then show that our methods also extend to the \textit{add-subtract} definition of neighboring with tighter privacy guarantees.

\begin{corollary}
If we run Algorithm~\ref{algo:posSingleQuant} with $\diffp_1 = \diffp_2$ for a quantile $q$ under Definition~\ref{def:neighbor_add_subtract} then it is $(q\diffp_1 + \diffp_2)$-DP. 

\end{corollary}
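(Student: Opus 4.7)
The proof plan is to combine the per-pair privacy loss bounds from Lemma~\ref{lem:bounds_for_add_subtract_neighbors} with the generalized DP and range-bounded characterizations of $\AT$ from Section~\ref{subsec:generalized}, then invoke the range-bounded to zCDP conversion. The key observation is that under the \emph{add-subtract} definition, the modified query $f_i(x) = |\{x_j \in x \mid x_j - \ell + 1 < \beta^i\}| - qn$ is no longer monotonic, so we cannot recycle Theorem~\ref{thm:gumAT-zCDP} directly; we need the finer one-sided privacy analysis.

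For the first claim, I would apply Corollary~\ref{cor:generalized_at_is_dp}, which says Algorithm~\ref{algo:lapAT} is $\diffp$-DP with $\diffp = \max_{x \sim x'} \diffp(x,x')$. By Lemma~\ref{lem:bounds_for_add_subtract_neighbors}, this supremum is bounded by $\max\{(1-q)\diffp_1,\, q\diffp_1 + \diffp_2\}$. Substituting $\diffp_1 = \diffp_2$, note that $(1-q)\diffp_1 \leq \diffp_1 = \diffp_2 \leq q\diffp_1 + \diffp_2$ for every $q \in [0,1]$, so the maximum reduces to $q\diffp_1 + \diffp_2$. Since Algorithm~\ref{algo:posSingleQuant} is just a single call to $\AT$ with these queries, the $(q\diffp_1 + \diffp_2)$-DP guarantee transfers directly.

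For the zCDP claim, I would use Corollary~\ref{cor:generalized_at_is_br}, which gives range-boundedness with parameter $\max_{x \sim x'}(\diffp(x,x') + \diffp(x',x))$. This requires going back into the case analysis inside the proof of Lemma~\ref{lem:bounds_for_add_subtract_neighbors} to extract both directional bounds simultaneously. In the case $g_k(x) = g_k(x')$ the proof yields $\diffp(x,x') \leq q\diffp_1$ and $\diffp(x',x) \leq q\diffp_2$, summing to $q(\diffp_1 + \diffp_2)$. In the case $g_k(x) = g_k(x')+1$ the proof yields $\diffp(x,x') \leq q\diffp_1 + \diffp_2$ and $\diffp(x',x) \leq (1-q)\diffp_1$, summing to $\diffp_1 + \diffp_2$. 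The maximum over both cases (and across all neighboring pairs) is $\diffp_1 + \diffp_2$, so $\AT$ with $\gum$ noise is $(\diffp_1 + \diffp_2)$-range-bounded. Proposition~\ref{prop:br_to_zcdp} then immediately upgrades this to $\tfrac{1}{8}(\diffp_1 + \diffp_2)^2$-zCDP.

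The only subtle point, and the closest thing to an obstacle, is making sure the two directional bounds I read off from Lemma~\ref{lem:bounds_for_add_subtract_neighbors}'s proof are coupled to the \emph{same} case (same neighboring pair and same halting index $k$), rather than naively adding the worst-case bound for each direction separately. A naive addition would give $2(q\diffp_1 + \diffp_2)$, which is weaker than $\diffp_1 + \diffp_2$ whenever $q > 1/2$. Because the underlying case dichotomy is determined by whether the differing individual's contribution is below $\beta^k$, the per-case bookkeeping is forced and the tighter sum is what actually applies.
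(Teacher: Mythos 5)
Your derivation of the first claim is exactly the paper's: Lemma~\ref{lem:bounds_for_add_subtract_neighbors} plus Corollary~\ref{cor:generalized_at_is_dp}, and with $\diffp_1=\diffp_2$ the maximum collapses to $q\diffp_1+\diffp_2$. That part is fine.

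The second claim has a genuine gap, and it is precisely the "subtle point" you flagged and then waved away. The range-bounded parameter in Corollary~\ref{cor:generalized_at_is_br} is $\diffp(x,x')+\diffp(x',x)$, where each term is \emph{separately} a maximum over halting indices, because the range-bounded definition quantifies over two independent outcomes $y$ and $y'$: the outcome $y$ governing the ratio $\prob{}{M(x)=y}/\prob{}{M(x')=y}$ and the outcome $y'$ governing the reverse ratio can have different halting indices $k\neq k'$, hence fall into different cases of the dichotomy. Your coupling to "the same case" is therefore not forced; the adversarial combination is $y$ in the case $g_k(x)-g_k(x')=1$ (giving $q\diffp_1+\diffp_2$) together with $y'$ in the case $g_{k'}(x)-g_{k'}(x')=0$ (giving, by your own reading of the lemma, $q\diffp_2$), for a decoupled total of $q\diffp_1+\diffp_2+q\diffp_2$, which with $\diffp_1=\diffp_2$ equals $(1+2q)\diffp_1$ and exceeds $\diffp_1+\diffp_2$ for every $q>1/2$ --- exactly the regime of maximal quantiles this corollary targets.

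The paper closes this hole differently: it invokes the $\gum$-specific relaxation in Lemma~\ref{lem:generalized_at_gum}, under which $\Delta_k(x',x)$ need not be clamped at $0$ and equals $-q$ in the case $g_k(x)-g_k(x')=0$, yielding $\diffp(x',x)\leq 0$ there rather than $q\diffp_2$. With that sharpening, the \emph{decoupled} maxima already give $\diffp(x,x')\leq q\diffp_1+\diffp_2$ and $\diffp(x',x)\leq(1-q)\diffp_1$, summing to $\diffp_1+\diffp_2$, and Proposition~\ref{prop:br_to_zcdp} finishes. This is also why the statement restricts the zCDP claim to $\gum$ noise --- a restriction your argument never uses, which is itself a sign that the key ingredient is missing.
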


\begin{proof}

Combining Lemma~\ref{lem:bounds_for_add_subtract_neighbors} and Corollary~\ref{cor:generalized_at_is_dp} immediately imply the desired result
    
\end{proof}

\subsection{Extension to multiple quantile estimation}

As previously established, the framework in \cite{kaplan2022differentially} is agnostic to the single quantile estimation method.
However their proof is for the \textit{add-subtract} neighbors, although they note that it can be extended easily to the \textit{swap} neighbors.
We also discuss here how it can be extended for completeness.

For the \textit{swap} neighboring definition, at each level of the recursive partitioning scheme either two partitions differ under the \textit{add-subtract} definition, or one partition differs under the \textit{swap} definition.
Applying their framework to our algorithm, we will instead compute all the thresholds in advance of the splitting. Accordingly these thresholds will remain fixed. If the thresholds are fixed then we can actually further improve our privacy bounds for the \textit{add-subtract} definition. Once again we will use our more general characterization from Section~\ref{subsec:generalized}.

\begin{lemma}\label{lem:bounds_for_add_subtract_neighbors_fixed_t}
Given the stream of queries $f_i(x) = |\{x_j \in x| x_j - \ell + 1 < \beta^i\}|$ with sensitivity $\Delta = 1$, for any neighboring datasets $x,x'$ under Definition~\ref{def:neighbor_add_subtract} we have ${\diffp}(x,x') \leq \max\{\diffp_1, \diffp_2\}$

\end{lemma}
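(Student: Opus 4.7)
The plan is to mirror the case analysis from the proof of Lemma~\ref{lem:bounds_for_add_subtract_neighbors}, observing that removing the $-qn$ shift makes the situation strictly simpler because the pointwise differences $f_i(x) - f_i(x')$ now lie in $\{0,1\}$ rather than involving fractional offsets. Without loss of generality I would take $x \in \R^n$ and $x' \in \R^{n-1}$ with a single extra datapoint $y$ in $x$. Since the thresholds $\beta^i$ are strictly increasing and the $f_i$ are monotone counts, there is a unique cutoff index $k^\star$ (the one with $\beta^{k^\star - 1} \leq y - \ell + 1 < \beta^{k^\star}$) at which $f_i(x) - f_i(x')$ switches from $0$ to $1$.

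Next I would unfold the definition of $\diffp(x,x')$ directly. Because $f_i(x) \geq f_i(x')$ pointwise, we get $\Delta_k(x,x') = 0$ for every $k$, which kills the $\diffp_1$ term. The remaining summand reduces to $\diffp_2 \cdot \max\{0, f_k(x) - f_k(x')\}$, which is at most $\diffp_2$ (and exactly $\diffp_2$ for $k \geq k^\star$, zero otherwise). So $\diffp(x,x') \leq \diffp_2$.

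For the reverse direction, I would split on the position of $k$ relative to $k^\star$. For $k \leq k^\star$ the queries agree on all earlier indices, so $\Delta_k(x',x) = 0$, and combined with $f_k(x) - f_k(x') \in \{0,1\}$ the second max also vanishes, giving contribution zero. For $k > k^\star$, $\Delta_k(x',x) = 1$ is attained at $i = k^\star$, while the second max becomes $\max\{0, 1 - 1\} = 0$, so the contribution is exactly $\diffp_1$. Taking the maximum over $k$ gives $\diffp(x',x) \leq \diffp_1$, and together with the symmetric case where $x'$ has the extra datapoint this yields the claimed bound $\diffp(x,x') \leq \max\{\diffp_1, \diffp_2\}$ for every ordered neighboring pair.

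There is no real obstacle here beyond careful bookkeeping. The only conceptual point worth underlining is that the second summand in the definition of $\diffp(\cdot,\cdot)$ is set up precisely so that $\Delta_k$ and $f_k(x) - f_k(x')$ offset each other once the extra datapoint has been absorbed; consequently only one of the budgets $\diffp_1$ and $\diffp_2$ is ever charged at a given $k$. This is the clean analogue of the $q\diffp_1$ vs.\ $q\diffp_1 + \diffp_2$ dichotomy of Lemma~\ref{lem:bounds_for_add_subtract_neighbors}, which degenerates here (with $q$ effectively zero) to a simple maximum.
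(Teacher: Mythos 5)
Your proof is correct and follows essentially the same route as the paper's: exploit that the extra datapoint makes $f_i(x)-f_i(x')\in\{0,1\}$ monotone in $i$, so $\Delta_k(x,x')=0$ (charging only $\diffp_2$) while in the reverse direction $\Delta_k(x',x)\le 1$ with the second max vanishing (charging only $\diffp_1$). Your explicit cutoff index $k^\star$ is just a slightly more detailed bookkeeping of the paper's two cases $f_k(x)-f_k(x')\in\{0,1\}$.
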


\begin{proof}

Without loss of generality, assume that $x$ has one more individuals data, so $x \in \R^n$ and $x' \in \R^{n-1}$. By construction we must have $f_i(x) - f_i(x') \in \{0,1\}$ because $x$ has an additional datapoint. Furthermore, because the thresholds are increasing, if $f_k(x) - f_k(x') = 0$ then $f_i(x) - f_i(x') = 0$ for all $i < k$.
Thus the case of $f_k(x) - f_k(x') = 0$ is easy.

Instead consider $f_k(x) - f_k(x') = 1$. We know ${\Delta}_k(x,x') = 0$ so ${\Delta}_k(x,x') - (f_k(x') - f_k(x)) = 1 $, so $\diffp(x,x') \leq \diffp_2$. Further we must have ${\Delta}_k(x',x) \leq 1$ and ${\Delta}_k(x',x) - (f_k(x) - f_k(x')) = 0 $, so $\diffp(x',x) \leq  \diffp_1$. Combining these implies $\diffp(x',x) \leq \max\{\diffp_1, \diffp_2\}$ for any neighboring datasets.

\end{proof}

Applying these bounds we see that applying the framework of \cite{kaplan2022differentially} to the swap definition either leads to one composition of $(\diffp_1 + \diffp_2)$-DP for the \textit{swap} definition or two compositions of $\max\{\diffp_1,\diffp_2\}$ for the \textit{add-subtract} definition at each level. Setting $\diffp_1 = \diffp_2$ we then have that the privacy cost of computing $m$ quantiles with this framework will be equivalent to $\log(m+1)$ compositions of $(\diffp_1 + \diffp_2)$-DP.

\subsection{Applying permute-and-flip framework to EMQ}\label{subsec:improve_previous}

It would be an interesting future direction to see if the EMQ method could also instead effectively utilize the permute-and-flip framework from \cite{mckenna2020permute} that was shown to improve accuracy.
While this framework is equivalent to adding $\expo$ noise for report noisy max, the quantile problem requires considering an infinite domain, which makes the $\expo$ noise addition procedure impossible.
Using this alternate, but equivalent \cite{ding2021permute}, approach can make this possible.
If we look at Algorithm 3 from \cite{mckenna2020permute} we could also extend this to drawing a single point uniformly in the interval, then drawing the Bernoulli and removing it from the interval.
However due to the continuous nature of the interval, it's unlikely that performing this sampling without replacement will have any noticeable improvement over sampling with replacement which is equivalent to the exponential mechanism.
Furthermore, we cannot run this framework as efficiently because there are no corresponding nice closed form expressions compared to the exponential mechanism.
Ideally, we would modify this approach to draw each interval proportional to it's length, then draw the Bernoulli and remove the interval. 
However it is critical to note that this does not give an identical distribution because the sampling is done without replacement.
Accordingly this simple modification would not work, but there could potentially be other effective changes to fit this framework that we leave to future work.

Furthermore, we note that adding $\expo$ noise for report noisy max does not achieve the range-bounded property that the exponential mechanism enjoys \cite{durfee2019practical}.
So the composition improvements for zCDP from Proposition~\ref{prop:br_to_zcdp} could not be applied.



\section{Implementation code}\label{subsec:python_implementation}

For ease of implementation, we provide some simple python code to run our method with access to the respective $\noise$ generator of the users choosing. 
In our experiments we used exponential noise from the numpy library.

\begin{python}
def unboundedQuantile(data, l, b, q, eps_1,eps_2):
  d = defaultdict(int)
  for x in data:
    i = math.log(x-l+1,b) // 1
    d[i] += 1

  t = q * len(data) + noise(1/eps_1)
  cur, i = 0, 0
  while True:
    cur += d[i]
    i += 1
    if cur + noise(1/eps_2) > t:
      break 
  return b**i - l + 1
\end{python}

\end{document}